\newtheorem{theorem}{Theorem}
\newcommand{\x}{\mathbf{x}}
\newcommand{\y}{\mathbf{y}}
\newcommand{\z}{\mathbf{z}}
\newcommand{\s}{\mathbf{s}}
\newcommand{\h}{\mathbf{h}}
\newcommand{\bvec}{\mathbf{b}}
\newcommand{\ve}{\mathbf{v}}
\newcommand{\proj}{\mathbf{\Phi}}
\newcommand{\Cs}{\mathbf{C}_{\mathbf{s}_0}}
\newcommand{\W}{\mathbf{W}}
\DeclareMathOperator*{\argmin}{arg\,min}
\author{ Bahareh Tolooshams$^{*}$, {\it Student Member, IEEE}, Satish~Mulleti$^{*}$, {\it Member, IEEE}, Demba Ba, and Yonina C. Eldar, {\it Fellow, IEEE}
	\thanks{\scriptsize B. Tolooshams and D. Ba are with the School of Engineering and Applied Sciences, Harvard University, Cambridge, MA; S. Mulleti is with the Department of Electrical Engineering, Indian Institute of Technology (IIT) Bombay, Mumbai, India; Y. C. Eldar is with the Faculty of Math and Computer Science, Weizmann Institute of Science, Israel. Email: btolooshams@seas.harvard.edu, mulleti.satish@gmail.com, demba@seas.harvard.edu, yonina.eldar@weizmann.ac.il}
	\thanks{\scriptsize {$^{*}$Equal contribution.
			B. Tolooshams and D. Ba acknowledge the partial support of NSF under grant number DMS-2134157 and the ARO under grant number W911NF-16-1-0368. The latter is part of a collaboration between US DOD, UK MOD and UK Engineering and Physical Research Council (EPSRC) under the Multidisciplinary University Research Initiative. For S. Mulleti and Y. C. Eldar, this research was partially supported by the Israeli Council for Higher Education (CHE) via the Weizmann Data Science Research Center, by a research grant from the Estate of Tully and Michele Plesser, by the European Union’s Horizon 2020 research and innovation program under grant No. 646804-ERC-COG-BNYQ, by the Israel Science Foundation under grant no. 0100101, and by the QuantERA grant  C’MON-QSENS.}}
		\thanks{Manuscript accepted May, 2023. Preliminary results of this work appeared in the proceedings of IEEE Int. Conf. Acoust., Speech and Signal Process. (ICASSP) 2021.}
}
\definecolor{myblue}{RGB}{0,0, 120}
\begin{document}
\title{Unrolled Compressed Blind-Deconvolution}

	\markboth{Accepted to the IEEE Transactions of Signal Processing}%
{Shell \MakeLowercase{\textit{et al.}}: Bare Demo of IEEEtran.cls for Journals}
\maketitle
		
 	\begin{abstract}
The problem of sparse multichannel blind deconvolution (S-MBD) arises frequently in many engineering applications such as radar/sonar/ultrasound imaging. To reduce its computational and implementation cost, we propose a compression method that enables blind recovery from much fewer measurements with respect to the full received signal in time. The proposed compression measures the signal through a filter followed by a subsampling, allowing for a significant reduction in implementation cost. We derive theoretical guarantees for the identifiability and recovery of a sparse filter from compressed measurements. Our results allow for the design of a wide class of compression filters. We, then, propose a data-driven unrolled learning framework to learn the compression filter and solve the S-MBD problem. The encoder is a recurrent inference network that maps compressed measurements into an estimate of sparse filters. We demonstrate that our unrolled learning method is more robust to choices of source shapes and has better recovery performance compared to optimization-based methods. Finally, in data-limited applications (fewshot learning), we highlight the superior generalization capability of unrolled learning compared to conventional deep learning.
 	\end{abstract}
	
	\IEEEpeerreviewmaketitle
	
\section{Introduction}

In a variety of remote sensing applications (e.g., radar imaging~\cite{herman2009high}, seismic signal processing~\cite{filho201seismic}, sonar imaging~\cite{carter_sonar}, and ultrasound imaging~\cite{eldar_sos,eldar_beamforming}), an unknown source signal reflected from sparsely located targets is measured through a multichannel receiver system. The signal received at each channel can be modelled as a linear convolution of a source signal and sparse filters; the filters, modelling the targets' locations, are also known as impulse responses of the channels~\cite{bajwa_radar, bar_radar}. Sparse multichannel blind-deconvolution (S-MBD) is the problem of recovering the source signal shape and sparse filters from received measurements~\cite{wang_chi, kazemi2014sparse}.

Bilen et al. \cite{bilen} proposed an $\ell_1$-based convex optimization method to solve the S-MBD problem. Gribonval et al. \cite{gribonval_dl} and Mulleti et al. \cite{mulleti_mbd} studied S-MBD in the context of sparse dictionary calibration, and Garcia-Cardona and Wohlberg~\cite{garcia_cdl} formulated it via convolutional dictionary learning. Theoretical analysis on identifiability of S-MBD exists in the literature~\cite{lee_17, wang_chi, cosse_mbd}. These works derived the identifiability results in terms of the number of channels by assuming that the sparse filters are random and all the measurements in each channel are available. Specifically, they showed that given $M$ measurements in each channel, $N = \mathcal{O}(M \log^4 M)$ channels are sufficient for identifiability \cite{lee_17, wang_chi}. In addition to using a large number of channels, these results rely on having access to full measurements at each channel. None of these methods uses the sparsity of the filters to reduce the number of measurements in each channel. In a multichannel receiver system, hardware cost and power consumption increase with the number of channels, as each channel requires a dedicated analog-to-digital converter and corresponding circuitry. Further, the computational cost depends on the number of channels and the number of measurements per channel, as a larger number of measurements requires more storage and computations. To reduce the hardware implementation and computational cost, it is desired to solve S-MBD from compressed (few) measurements captured from fewer channels.

In the context of dictionary learning, Tolooshams et al.~\cite{tolooshams2018mlsp,tolooshams2020tnnls} proposed structured inference neural networks to solve the S-MBD problem using algorithm unrolling~\cite{monga2019algorithm, shlezinger2022model}; however, their method still relies on full measurements. Chang et al. \cite{chang2019randnet} further extend the unrolling framework to enable dictionary learning from compressed measurements through an unstructured random matrix; this compression has two drawbacks. First, random-matrix-based compression may not achieve large compression rates in practice. Second, their practical implementation may be challenging due to their unstructured nature.

Mulleti et al. \cite{mulleti_mbd} addressed the later drawback and proposed a structured, linear compression operator that could be realized using a finite impulse response (FIR) filter. The authors took a frequency-domain approach for S-MBD and derived the identifiability results by using a fixed sum-of-sincs (SoS) FIR filter for compression. In addition, the authors applied the blind dictionary calibration (BDC) method \cite{gribonval_dl} and the truncated-power iteration (TPI) approach \cite{li2017blind} to recover sparse filters from the compressed Fourier measurements. This approach's limitation is that the source's frequency spectrum is assumed not to vanish on selected points. Since the source is unknown in advance, designing the filter is challenging. Furthermore, the structure of the filter is fixed and independent of the reconstruction method.

Recent work addressed this challenge by offering a data-dependent joint design of the frequency locations of a continuous-time SoS filter and the reconstruction parameters~\cite{mulleti_los_icassp}. However, the filters still follow the SoS structure with learnable frequency locations, resulting in limited flexibility. Hence, a natural question is whether a general condition on the filters, which does not require an SoS form, can be derived such that compression is possible with perfect recovery. Moreover, can we leverage data to learn such a filter together with the reconstruction method?

\subsection{Contributions}

In this paper, we focus on answering the preceding two questions. Starting from a time-domain approach, we consider the problem of filter-based compression for S-MBD. We derive time-domain identifiability conditions on the compression filter and the S-MBD measurements to enable recovery. The time-domain approach helps in deriving general conditions on the filters compared to the frequency-domain framework. Moreover, we offer learning-based methods for designing the filters which were briefly proposed and discussed in~\cite{tolooshams2021lsmbd}. Our main contributions are summarized as follows.
 
\begin{itemize}
    \item We formulate a compressive sparse-MBD problem based on a hardware-efficient linear compression operator that can be realized using an FIR filter. Unlike prior theoretical analysis on Fourier samples~\cite{mulleti_mbd}, we provide time-domain theoretical guarantees for the recovery of sparse filters from reduced measurements. We derive identifiability conditions jointly on the compression filter and the source in terms of spark properties on the convolution matrices of the source and compression filter. Unlike non-vanishing frequency requirements of the source's Fourier spectrum in prior work~\cite{mulleti_mbd}, our framework neither enforces any specific structural requirement on the compression filter nor places vanishing spectrum conditions on the source. However, the above-mentioned flexibility on identifiability conditions comes with a trade-off, a need for more measurements than the minimum required Fourier samples achieved by~\cite{mulleti_mbd} for identifiability.  
    \item We propose unrolled inference networks (\cref{fig:infmap_unroll}); this is to give an unrolling structure to standard deep learning inference networks (\cref{fig:infmap_nn}) using our compressive S-MBD problem formulation. We build upon our preliminary results in~\cite{tolooshams2021lsmbd} and enable learning of the compression filter from data. We note that the architecture shown in \cref{fig:ls-mbd}a is previously proposed by~\cite{tolooshams2021lsmbd}. We show that unrolled learning outperforms optimization-based methods and highlight its robustness to the source shapes of Gaussian and symmetric wavelet (\cref{fig:fsmbdbaseline}). We characterize the effect of unrolling, as compression ratio (CR) varies, e.g., only a few unrolled layers are needed to achieve highly accurate identification of the target locations (\cref{fig:unroll}). Furthermore, we propose an additional linear compression, performing filtering followed by decimation; the compression is more memory efficient (\cref{tab:memory}) and has a minimal decline in performance (\cref{fig:bank}). Our results differ from~\cite{tolooshams2021lsmbd} as follows: \cite{tolooshams2021lsmbd} a) provided no theoretical analysis of their proposed unrolled network, b) considered only the noiseless setting where the source shape is Gaussian, hence, no robust characterization was performed, and c) focused on target amplitude recovery rather than hit rate. Moreover, the new compression based on decimated filtering and the following contribution are unique to this paper.
    \item We propose combining a network denoiser with an unrolled network and show that this framework significantly improve performance in the task of recovery from compressed noisy measurements (\cref{fig:noisy}). Finally, we highlight the advantage of unrolled learning over conventional deep learning in fewshot learning; our framework is significantly superior to generic deep networks in the low data regime (\cref{fig:fewshot}).
\end{itemize}

\subsection{Notations and Organizations}
We use the following notation in the paper. Vectors and matrices are denoted by lower and uppercase boldfaced letters, respectively. A real-valued, $N$-length, causal sequence $\{s[0], s[1], \cdots, s[N-1]\}$ is denoted by a vector $\mathbf{s} = [s[0], s[1], \cdots, s[N-1]]^{\mathrm{T}} \in \mathbb{R}^N$  where the superscript $\mathrm{T}$ denotes transpose. For any two sequences $\mathbf{s} \in \mathbb{R}^{M_s}$ and $\mathbf{x} \in \mathbb{R}^{M_x}$, their convolution is $\mathbf{s}*\mathbf{x}\in \mathbb{R}^{M_s+M_x-1}$. For any vector $\mathbf{s} \in \mathbb{R}^{M_s}$, a convolution matrix with $M$ columns is constructed as,
\begin{align}
	\mathbf{C}_s(M) =  \begin{pmatrix}
		\mathbf{s}[0] &   & 0 \\
		\mathbf{s}[1] & \ddots  & \vdots \\
		\vdots &  \ddots & \mathbf{s}[0] \\
		\mathbf{s}[M_s-1] &   & \mathbf{s}[1]\\
		\vdots  &  \ddots &  \vdots \\
		0 &   &  \mathbf{s}[M_s-1]         
	\end{pmatrix}.
\end{align}
With this notation, the convolution $\mathbf{s}*\mathbf{x}$ is represented as $\mathbf{C}_s(M_x) \mathbf{x} = \mathbf{C}_x(M_s) \mathbf{s}$. The symbol $\underset{t}{\ast}$ denotes truncated convolution, with the longer vector appearing on the left side of the symbol. For $M_s \geq M_x$, $\mathbf{s}\underset{t}{\ast} \mathbf{x}$ denotes values $\mathbf{s}*\mathbf{x}[m]$ for $M_x \leq m \leq M_s$. The spark of a matrix $\mathbf{A}$, that is, $\text{Spark}(\mathbf{A})$ is defined as the minimum number of columns of $\mathbf{A}$ that are linearly dependent. 

The paper is organized as follows. In~\cref{sec:pf}, we formulate the S-MBD problem mathematically and present our assumptions. Time-domain identifiability results are derived in~\cref{sec:identifiability}. In~\cref{sec:unroll}, we present a data-driven unrolled-based compression filter design; we outline various unrolled architectures and discuss their training mechanism. Numerical results in~\cref{sec:results} followed by a conclusion.

\section{Problem Formulation}
\label{sec:pf}
Consider a set of $N$ signals given as,
\begin{align}
    \mathbf{y}^n = \mathbf{s}_0*\mathbf{x}^n, \quad n =1,\ldots, N,
    \label{eq:mbd1}
\end{align}
where $*$ denotes linear convolution operation. In this model, $\mathbf{s}_0$ represents a common source signal, and $\mathbf{x}^n$ denotes $n$-th channel filter. 

The problem of multichannel blind-deconvolution (MBD) is to determine the source and filters from the measurements $\{\mathbf{y}^n\}_{n=1}^N$. The problem is ill-posed without further assumptions on either the source, or the filters, or both. Sparse MBD (S-MBD) is one such widely applicable model where the filters are assumed to be sparse. In addition, the source is assumed to have a finite impulse response (FIR). Specifically, we make the following structural assumptions:
\begin{enumerate}
    \item [(A1)] \textbf{Sparse Filters}$:\mathbf{x}^n\in \mathbb{R}^{M_x}$ and $\|\mathbf{x}^n\|_0\leq L$ for $n = 1, \ldots, N.$
    \item [(A2)] \textbf{FIR Source}$:\mathbf{s}_0\in \mathbb{R}^{M_{s_0}}$.\\
    \item [(A3)] \textbf{Sparse Coprime Filters}: Consider the decomposition $\mathbf{x}^n = \mathbf{h}_0*\bar{\mathbf{x}}^n$ for all $n$. If length of $\mathbf{h}_0$ is greater than one, then $\|\bar{\mathbf{x}}^n\|_0 > L$ for all $n$.
    \item [(A3$^{\prime}$)] \textbf{Non-zero first and last elements of the source}: $\mathbf{s}_0[1] \neq 0$ and $\mathbf{s}_0[M_{s_0}] \neq 0$.
    
\end{enumerate}

The sparse filter model with FIR source is ubiquitous in  applications \cite{mulleti_mbd}. For example, in radar imaging with $N$ receivers, $\mathbf{s}_0$ denotes the transmit pulse. By assuming that the target scene consists of $L$ stationary point targets, the signal received by the $n$-th receiver can be modelled as $\mathbf{y}^n$ in \eqref{eq:mbd1}. Here, non-zero locations and amplitudes of the sparse filter $\mathbf{x}^n$ contain the information of the distances and amplitudes of the targets for the $n$-th receiver. 
As one can generate only compactly supported transmits signals, the FIR source model is justified. As discussed in \cite{mulleti_mbd}, either (A3) or (A3$^\prime$) ensure unique identifiability \footnote{Assumption (A3$^\prime$) is simple to verify in practice compared to assumption (A3). For identifiability, we need either (A3) or (A3$^\prime$) to be satisfied.} Here, identifiability implies that the source and sparse filters can be uniquely determined up to a scaling constant. As we are representing one-dimensional sequences as finite-length vectors with known support, we do not have shift ambiguity which alters the support of the sequences. 

In the compressive S-MBD problem, the objective is to determine either the sparse filters or the source or both from compressed measurements of $\{\mathbf{y}^n\}_{n=1}^N$. In this paper, we focus on recovering the sparse filters $\{\mathbf{x}^n\}_{n=1}^N$s from compressed measurements. This problem is equivalent to estimating the locations of the targets in radar~\cite{herman2009high} or ultrasound imaging~\cite{eldar_sos,eldar_beamforming}. The desirable properties of the compression operator are that it should be linear, practically implementable, and importantly allows perfect recovery with the highest compression rate possible. To this end, we use linear-filtering-based compression. Specifically, to compress the measurements, we consider subsampling the filtered signal
\begin{align}
\mathbf{z}^n = \mathbf{h}*\mathbf{y}^n = \mathbf{h}*\mathbf{s}_0*\mathbf{x}^n,
\label{eq:z}
\end{align}    
where $\mathbf{h}\in \mathbb{R}^{M_h}$ is a compression filter applied across all channels. Note that the compression mechanism is similar to that in \cite{mulleti_mbd}. However, a major difference is that in our framework, the filter $\mathbf{h}$ does not require a discrete-SoS (D-SoS) model. The compressed measurements can be written as
\begin{align}
\mathbf{z}^n =  \mathbf{s}*\mathbf{x}^n,
\label{eq:z1}
\end{align}
where $\mathbf{s} = \mathbf{h}*\mathbf{s}_0 \in \mathbb{R}^{M_s}$ is the \emph{effective source} of length $M_{s} = M_{s_0}+M_h-1$, and $M_y = M_{s_0}+M_x-1$. If the number of subsampled measurements of $\mathbf{z}^n$ is less than $M_y$ then compression is achieved.

In the following section, we discuss conditions on the filter $\mathbf{h}$, or more specifically on the effective source $\mathbf{s}$ to ensure identifiability up to a scaling ambiguity. Specifically, if there exist an alternative set of effective source $\hat{\mathbf{s}}$ and sparse filters $ \{\hat{\mathbf{x}}^n\}_{n=1}^N$, such that 
\begin{align}
\mathbf{z}^n =  \hat{\mathbf{s}}*\hat{\mathbf{x}}^n,\quad n = 1, \ldots, N,
\label{eq:alt_soln}
\end{align}
on the subsampled locations then, there exists a scalar $\beta \neq 0$ such that
\begin{align}
\hat{\mathbf{s}} = \beta \, \mathbf{s} \quad \text{and} \quad \hat{\mathbf{x}}^n = \mathbf{x}^n/\beta, \quad n=1,\ldots, N. \label{eq:scale_amb}
\end{align}
If a set of alternative solutions satisfies \eqref{eq:alt_soln} but not \eqref{eq:scale_amb} then the problem is not uniquely identifiable.

\section{Time-Domain Theoretical Guarantees For Compressed S-MBD}
\label{sec:identifiability}
\begin{figure}[!t]
	\centering
	\begin{subfigure}[t]{0.99\linewidth}
		\includegraphics[width = 3.4 in]
		{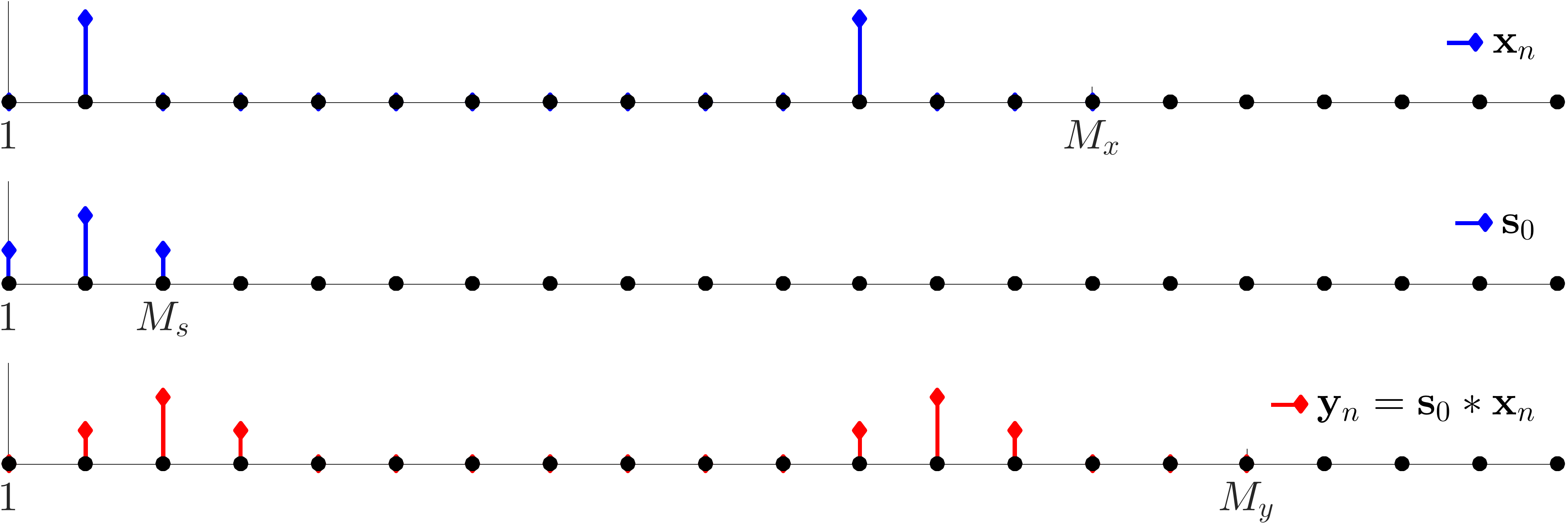}
		\caption{MBD measurement.}
		\label{fig:example1}
		\end{subfigure}
		\begin{subfigure}[t]{0.99\linewidth}
 		\includegraphics[width = 3.4in]{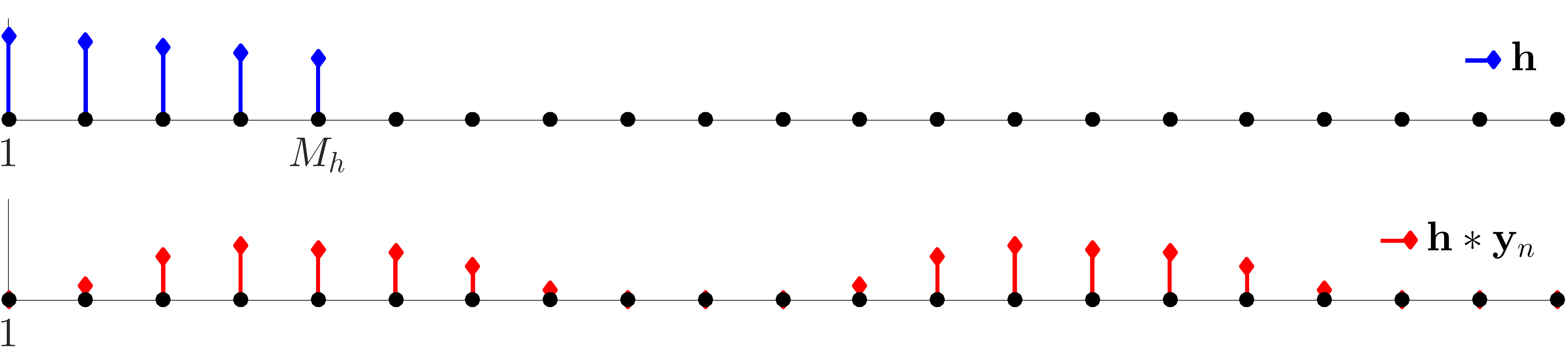}
 		\caption{Filtering by a short filter.}
  		\label{fig:example2}
		\end{subfigure}
		\begin{subfigure}[t]{0.99\linewidth}
		\includegraphics[width = 3.4in]{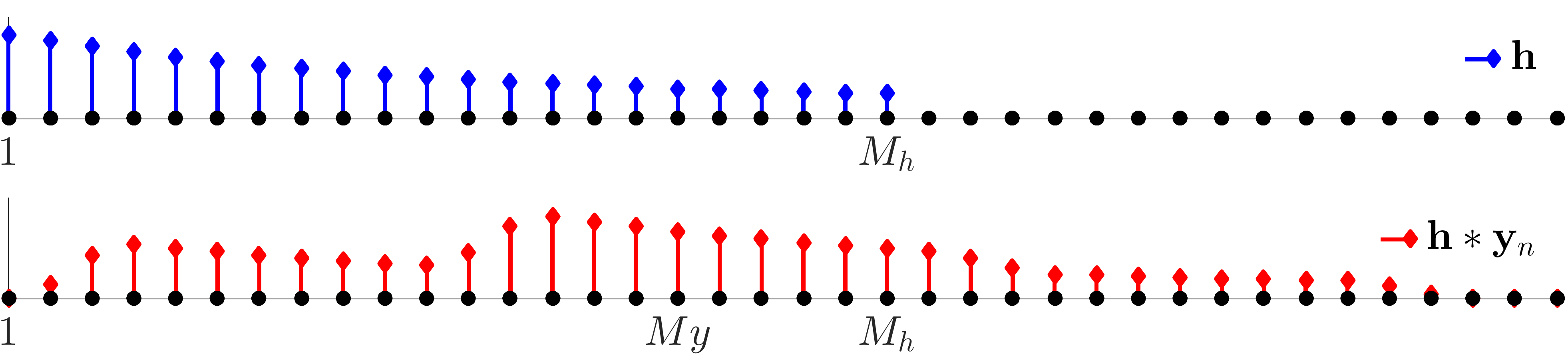}
		\caption{Filtering by a long filter.}
		\label{fig:example3}
		\end{subfigure}
\caption{A toy example to demonstrate that recovery from compression through subsampling may not be possible for $M_h \leq M_y$.}
\label{fig:example}
\end{figure}

In this section, we discuss the problem of uniquely identifying the sparse filters $\{\mathbf{x}^n\}_{n=1}^N$ from subsamples of $\{\mathbf{z}^n\}_{n=1}^N$. The identifiability depends on the subsampling strategy, which in turn depends on the relative lengths of the compression filter $\mathbf{h}$, source $\mathbf{s}_0$, and sparse filters. 

The MBD measurements in \eqref{eq:z1} can be written as
\begin{align}
    \mathbf{z}^n = \mathbf{C}_{\h} \, \Cs \mathbf{x}^n \in \mathbb{R}^{M_z}, \quad n = 1, \ldots, N,
\label{eq:mbd_matrix}
\end{align}
where $\Cs \in \mathbb{R}^{M_y \times M_x}$  and $\mathbf{C}_{\h} \in \mathbb{R}^{M_z \times M_y}$ are convolution matrices corresponding to the source and filter,  respectively. To identify $\{\mathbf{x}^n\}_{n=1}^N$ from subsamples of $\{\mathbf{z}^n\}_{n=1}^N$, we consider a binary-compression or subsampling operator $\mathbf{B} \in \{0, 1\}^{K \times M_z}$ where $K<M_z$ is the number of compressed measurements. Each row of $\mathbf{B}$ has exactly one non-zero element, and the matrix has a full row rank. The compressed measurements are given as
\begin{align}
    \bar{\mathbf{z}}^n = \mathbf{B}\mathbf{z}^n= \mathbf{B} \mathbf{C}_{\h}\, \Cs \mathbf{x}^n, \quad n = 1, \ldots, N.
    \label{eq:compressed_mes}
\end{align}
The identifiability depends on the number of subsamples $K$, subsampling pattern or matrix $\mathbf{B}$, and properties of $\mathbf{C}_{\h}\, \Cs$. 

To proceed further, we present a few basic properties of $\mathbf{C}_{\s}  = \mathbf{C}_{\h}\, \Cs $ that will allow subsampling. We consider two possible scenarios: (i) $M_h \leq M_y$ and (ii) $M_h> M_y$. We first show that recovery from compression through subsampling may not be possible for $M_h \leq M_y$. Consider a toy example with $M_x = 15$, $L=2$, and $M_{s_0} = 3$ (See~\cref{fig:example1}). In~\cref{fig:example2}, we chose a compression filter of length $M_h = 4$ and plot the filtered signal $\mathbf{h}*\mathbf{y}^n$. We observe that a few values of $\mathbf{z}^n = \mathbf{h}*\mathbf{y}^n$ are zero due to the short compression filter. Subsampling those values will result in a loss of information. In addition, each non-zero sample of $\mathbf{z}^n$ only has a contribution from a few non-zero entries of $\mathbf{x}^n$. This implies that the information of $\mathbf{x}^n$ is not well spread across the samples of $\mathbf{z}^n$ which is desirable for compression. 

Next, consider a long compression filter such that $M_h> M_y$. In the toy example, we set $M_h = M_y + 5$. The compression filter and $\mathbf{z}^n = \mathbf{h}*\mathbf{y}^n$ are shown in~\cref{fig:example3}. We note that none of the values are zero unless the convolution sum is zero in that sample. Further, it can be verified that the values of $\mathbf{z}^n[m]$ for $M_y-1 \leq m \leq M_h-1$ have a contribution from all the samples of $\mathbf{y}^n$ \cite{tolooshams2021lsmbd}. This implies that these samples, indexed by the set $\mathcal{K}= \{M_y-1, M_y, \cdots, M_h-1\}$, carry information of all the non-zero components of the sparse filter. The set $\mathcal{K}$ includes the sample of the convolution $\mathbf{C}_{\h}$ where $\h$ and $\y$ are fully overlapped. Hence, perfect recovery through subsampling may be possible by considering the measurements $\{\mathbf{z}^n[m]\}_{m \in \mathcal{K}}$. Given this intuition, we set the operator $\mathbf{B}$ to subsample the measurements from the set $\mathcal{K}$ where the compression ratio is controlled by the relative length of the compression filter $M_h$ and signal $M_y$. For the remainder of this section, we denote $\mathbf{T}_h(r) \coloneqq \mathbf{B} \mathbf{C}_{\h}$, applied on a signal of length $r$. For a fixed $M_y$, the number of compressed measurements $K = |\mathcal{K}| = M_h-M_y+1$ increases linearly with the length of the compression filter. The measurements $\{\mathbf{z}^n[m]\}_{m \in \mathcal{K}}$ are said to be compressed if $K<M_y$. We next derive conditions that ensure compression together with identifiability. Given that the subsamples $\{\mathbf{z}^n[m]\}_{m\in \mathcal{K}}$ carry information about the sparse filters, the questions required to be answered are whether the sparse filters can be identifiable from these measurements with the assumption that $K<M_y$. To answer the question, we next study the problem of identifying the sparse filters from the subsamples of the truncated convolution
\begin{align}
	\bar{\mathbf{z}}^n = \mathbf{T}_h(M_y) \mathbf{y}^n = \mathbf{h} \underset{t}{\ast} \mathbf{y}^n, 
	\label{eq:subsampled}
\end{align}
where the Toeplitz matrix $\mathbf{T}_h(r) \in \mathbb{C}^{ (M_h-r+1)\times r}$, for $1 \leq r \leq M_h$, is constructed as
\begin{align}
	\mathbf{T}_h(r) = 
	\begin{pmatrix}
		\mathbf{h}[r-1] & \mathbf{h}[r-2] &  \cdots & \mathbf{h}[0] \\
		\mathbf{h}[r] & \mathbf{h}[r-1] &  \cdots & \mathbf{h}[1] \\
		\mathbf{h}[r+1] & \mathbf{h}[r] &  \cdots & \mathbf{h}[2] \\
		\vdots  & \vdots  & \ddots & \vdots   \\
		\mathbf{h}[M_h-1] & \mathbf{h}[M_h-2] &  \cdots &  \mathbf{h}[M_h-r] \\ 
	\end{pmatrix}.
	\label{eq:Tmat}
\end{align}
For $r = M_y$, the matrix consists of $K = M_h-M_y+1$ rows or $K$ compressed measurements. The compressed measurements can alternatively be written as 
\begin{align}
\bar{\mathbf{z}}^n = \mathbf{T}_h(M_y) \Cs(M_x)\mathbf{x}^n = \mathbf{A}_{(M_y, M_x)}\mathbf{x}^n,
\label{eq:z=Ax}
\end{align}
 where 
 \begin{align}
 	\mathbf{A}_{(M_y, M_x)} = \mathbf{T}_h(M_y) \Cs(M_x) \in \mathbb{C}^{K \times M_x},
 	\label{eq:Amat}
 \end{align}
is equivalent to the compression matrix in standard sparse recovery. The subscripts of $\mathbf{A}$ denote the number of columns of matrices $\mathbf{T}_h$ and $\Cs$. In the current setup, $\mathbf{A}$ is unknown, unlike the standard sparse recovery problem where the matrix is known. For identifiability, the spark property of the matrix $\mathbf{A}$ plays a crucial role, as summarized in the following theorem. 
\begin{theorem}[Sparse-Filters Identifiability for Compressive MBD]
	\label{theorem1}
	{Consider the MBD measurements $\mathbf{y}^n = \mathbf{s}_0*\mathbf{x}^n, n = 1, \ldots, N$ where $N\geq 2$. The source $\mathbf{s}_0$ satisfies assumption (A2) and the sparse filters $\{\mathbf{x}^n\}_{n=1}^N$ satisfy assumption (A1) with sparsity level $L< \min\left(\sqrt{M_{s_0}/2}, \sqrt{M_x} \right)$. Consider the measurements  $\mathbf{z}^n[m], m \in \mathcal{K}$ or $\bar{\mathbf{z}}^n = \mathbf{A}_{(M_y, M_x)} \mathbf{x}^n, n = 1, \ldots, N$. The measurements are compressed, and we have the following necessary and sufficient conditions for identifiability. 		\begin{enumerate}
			\item (Necessary) If $\text{Spark}(\mathbf{A}_{(M_y, M_x)}) \leq 2L$ then the sparse filters are not uniquely identifiable.
			\item (Sufficient) If $\text{Spark}(\mathbf{A}_{(M_y+M_x-1, 2M_x-1)}) > 2L^2$ and either (A3) or (A3$^{\prime}$) is satisfied, then the sparse filters can be identified up to a scaling constant.
		\end{enumerate}
	}
\end{theorem}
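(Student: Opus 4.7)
For necessity I would proceed by contrapositive in the standard sparse-recovery style: if $\mathrm{Spark}(\mathbf{A}_{(M_y,M_x)})\leq 2L$, take a non-zero null-space vector $\mathbf{d}$ with $\|\mathbf{d}\|_0\leq 2L$, split it as $\mathbf{d}=\mathbf{u}-\mathbf{v}$ with $\mathbf{u},\mathbf{v}$ each $L$-sparse and distinct, and set $\mathbf{x}^n=\mathbf{u}$, $\hat{\mathbf{x}}^n=\mathbf{v}$ for every $n$ together with $\hat{\mathbf{s}}=\mathbf{s}$. This produces two $L$-sparse configurations generating identical compressed measurements that cannot be related by the scaling~\eqref{eq:scale_amb}.

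For sufficiency the plan is a truncated cross-convolution trick that reduces the compressed problem to the classical multichannel commutativity relation. Suppose $(\hat{\mathbf{s}},\{\hat{\mathbf{x}}^n\})$ is an alternative with $(\mathbf{s}*\mathbf{x}^n)[m]=(\hat{\mathbf{s}}*\hat{\mathbf{x}}^n)[m]$ for every $m\in\mathcal{K}$ and all $n$. For each pair $(n,n')$ and each $k\in[M_y+M_x-2,\,M_h-1]$, I would expand
\[
(\mathbf{s}*\mathbf{x}^n*\hat{\mathbf{x}}^{n'})[k]=\sum_{j=0}^{M_x-1}\hat{\mathbf{x}}^{n'}[j]\,(\mathbf{s}*\mathbf{x}^n)[k-j],
\]
and observe that on this $k$-range one has $k-j\in\mathcal{K}$ for every $j\in[0,M_x-1]$, licensing term-by-term substitution with the alternative to obtain $(\hat{\mathbf{s}}*\hat{\mathbf{x}}^n*\hat{\mathbf{x}}^{n'})[k]$. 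Swapping $n\leftrightarrow n'$ and subtracting yields
\[
\bigl[\mathbf{s}*(\mathbf{x}^n*\hat{\mathbf{x}}^{n'}-\mathbf{x}^{n'}*\hat{\mathbf{x}}^n)\bigr][k]=0,\qquad k\in[M_y+M_x-2,\,M_h-1].
\]
Since the map sending a length-$(2M_x-1)$ vector $\mathbf{v}$ to $(\mathbf{s}*\mathbf{v})|_{[M_y+M_x-2,\,M_h-1]}$ factors as $\mathbf{T}_h(M_y+M_x-1)\,\mathbf{C}_{\mathbf{s}_0}(2M_x-1)=\mathbf{A}_{(M_y+M_x-1,\,2M_x-1)}$, this is precisely the linear system $\mathbf{A}_{(M_y+M_x-1,\,2M_x-1)}\,\mathbf{v}^{n,n'}=0$ with $\mathbf{v}^{n,n'}=\mathbf{x}^n*\hat{\mathbf{x}}^{n'}-\mathbf{x}^{n'}*\hat{\mathbf{x}}^n$.

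Each $\mathbf{x}^n*\hat{\mathbf{x}}^{n'}$ has at most $L^2$ non-zero entries, so $\mathbf{v}^{n,n'}$ is at most $2L^2$-sparse, and the spark hypothesis then forces $\mathbf{v}^{n,n'}=0$; that is, $\mathbf{x}^n*\hat{\mathbf{x}}^{n'}=\mathbf{x}^{n'}*\hat{\mathbf{x}}^n$ for every $n,n'$. I would close the argument by invoking the MBD uniqueness lemma of~\cite{mulleti_mbd}, which under either (A3) or (A3$^\prime$) together with the sparsity cap $L<\sqrt{M_{s_0}/2}$ turns this commutativity relation into $\hat{\mathbf{x}}^n=\mathbf{x}^n/\beta$ for a common nonzero scalar $\beta$, yielding the scaling ambiguity~\eqref{eq:scale_amb}. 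The main obstacle is the substitution step itself: in uncompressed MBD the cross-convolution identity is immediate from global equality of the two sides, whereas here equality is available only on the contiguous set $\mathcal{K}$. The technical content is to pinpoint the sub-range $[M_y+M_x-2,\,M_h-1]$ on which every summand $(\mathbf{s}*\mathbf{x}^n)[k-j]$ appearing lies in $\mathcal{K}$ and to recognise that the resulting system is exactly the one governed by $\mathbf{A}_{(M_y+M_x-1,\,2M_x-1)}$; the bounds $L<\sqrt{M_x}$ and $L<\sqrt{M_{s_0}/2}$ are what keep this sub-range non-empty, make $2L^2$ compatible with the spark hypothesis, and feed the downstream uniqueness lemma.
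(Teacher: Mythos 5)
Your argument is correct, and for the two identifiability claims it follows the same overall strategy as the paper: necessity via a spark-deficient null-space vector split into two distinct $L$-sparse candidates sharing the same compressed measurements (which cannot satisfy \eqref{eq:scale_amb} once $\hat{\mathbf{s}}=\mathbf{s}$), and sufficiency via the cross-convolution vector $\mathbf{x}^n\ast\hat{\mathbf{x}}^{n'}-\mathbf{x}^{n'}\ast\hat{\mathbf{x}}^{n}$, its $2L^2$-sparsity, the spark hypothesis on $\mathbf{A}_{(M_y+M_x-1,2M_x-1)}$ forcing it to vanish, and the $z$-domain argument of \cite{mulleti_mbd} under (A3) or (A3$^\prime$). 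Where you genuinely diverge is in how the truncation is handled. The paper passes from the truncated equality \eqref{eq:subsampled1} to the full-length equality \eqref{eq:full_mes_alt} with only the remark that the vectors are finite dimensional, and then forms the cross-convolution; as written, that implication is not immediate in the compressed regime, since $\mathbf{T}_h(M_y)$ has fewer rows than columns and hence a nontrivial null space. You instead perform the substitution only on the window $k\in[M_y+M_x-2,\,M_h-1]$, where every needed sample $(\mathbf{s}\ast\mathbf{x}^n)[k-j]$, $0\leq j\leq M_x-1$, lies in $\mathcal{K}$, and you correctly identify the restricted map $\mathbf{v}\mapsto(\mathbf{s}\ast\mathbf{v})$ on that window with $\mathbf{A}_{(M_y+M_x-1,2M_x-1)}=\mathbf{T}_h(M_y+M_x-1)\,\Cs(2M_x-1)$, landing directly on \eqref{eq:Aq=0}. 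This buys a derivation that uses nothing beyond the observed samples (and works for an arbitrary alternative effective source, not only one of the form $\mathbf{h}\ast\hat{\mathbf{s}}_0$), so it is a tighter route to the same relation; from that point onward the two proofs coincide.

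Two smaller issues. First, the theorem also asserts that the measurements are compressed ($K<M_y$), and your proposal does not address this. The paper proves it by noting that the spark hypotheses force enough rows (e.g., $\text{Spark}(\mathbf{A}_{(M_y+M_x-1,2M_x-1)})>2L^2$ requires $M_h\geq 2L^2+2M_x+M_{s_0}-3$, hence $K=M_h-M_y+1\geq 2L^2+M_x-1$) and then invoking $L<\sqrt{M_{s_0}/2}$, i.e.\ $2L^2<M_{s_0}$, to conclude $K<M_y$. Second, and relatedly, you attribute the bound $L<\sqrt{M_{s_0}/2}$ to the downstream uniqueness lemma and to nonemptiness of your index window; in fact the uniqueness step needs only $L<\sqrt{M_x}$ (as in \cite{mulleti_mbd}), the window is nonempty because the spark hypothesis itself forces $M_h$ to be large enough, and the sole role of $L<\sqrt{M_{s_0}/2}$ is the compression claim you omitted.
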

\begin{proof}
	We first focus on the necessary and sufficient conditions and then show that for both cases the measurements are compressed. \\
	\noindent\emph{Necessary Conditions:} 
	From \eqref{eq:z=Ax}, we know that if $\text{Spark}(\mathbf{A}_{(M_y, M_x)}) \leq 2L$ then for every sparse filter $\mathbf{x}^n$ there exists another $L$-sparse vector $\mathbf{\hat{x}}^n \neq \mathbf{x}^n$ such that, 
	\begin{align}
		\mathbf{A}_{(M_y, M_x)} (\mathbf{x}^n- \mathbf{\hat{x}}^n) = \mathbf{0},\quad n=1, \ldots, N. 
		\label{eq:necessary1}
	\end{align}
	Hence, there exists an alternative set of sparse filters $\{\mathbf{\hat{x}}^n\}_{n=1}^N$ satisfying the compressed measurements.
	
	\noindent\emph{Sufficient Conditions:} 
	Let us assume that there exists an alternative source $\mathbf{\hat{s}}_0$ and a set of feasible sparse filters $\{\mathbf{\hat{x}}\}_{n=1}^N$ such that 
	\begin{align}
		\bar{\mathbf{z}}^n = \mathbf{h} \underset{t}{\ast} (\mathbf{s}_0*\mathbf{x}^n) = \mathbf{h} \underset{t}{\ast} (\mathbf{\hat{s}}_0*\mathbf{\hat{x}}^n).
		\label{eq:subsampled1}
	\end{align}
	This implies that the alternative set $\left(\mathbf{\hat{s}}_0, \{\mathbf{\hat{x}}\}_{n=1}^N \right)$ results in the same compressed measurements. Since all the vectors are of finite dimension, from \eqref{eq:subsampled1} it can be shown that
	\begin{align}
		 \mathbf{h}*\mathbf{s}_0*\mathbf{x}^n =\mathbf{h}*\mathbf{\hat{s}}_0*\mathbf{\hat{x}}^n, \quad n  = 1, \ldots, N.
		\label{eq:full_mes_alt}
	\end{align}
	 Then, for any two channels, say from channel numbers $n= 1, 2$, we have the following cross-convolution relationship \cite{mulleti_mbd}:
	 	\begin{align}
	 	\mathbf{h} \underset{t}{\ast} (\mathbf{s}_0 * \underbrace{(\mathbf{\hat{x}}_1 {\ast} \mathbf{x}_2 - \mathbf{\hat{x}}_2 {\ast} \mathbf{x}_1)}_{\mathbf{q}})
	 	= \mathbf{0},
	 	\label{eq:s*q=0}
	 \end{align}
	 where $\mathbf{q} \in \mathbb{R}^{2M_x-1}$. From the condition $L<\sqrt{M_x}$, we have that $\|\mathbf{q}\|_0 \leq 2L^2$ \cite{mulleti_mbd}. Using \eqref{eq:Amat}, we rewrite \eqref{eq:s*q=0} as 
	 \begin{align}
	 	\mathbf{A}_{(M_y+M_x-1, 2M_x-1)}\,\mathbf{q} = \mathbf{0}.
	 	\label{eq:Aq=0}
	 \end{align}
	Since $\|\mathbf{q}\|_0 \leq 2L^2$, from \eqref{eq:Aq=0} we infer that $\mathbf{q} = \mathbf{0}$ if $\text{Spark}(\mathbf{A}_{(M_y+M_x-1, 2M_x-1)}) \geq 2L^2$. Then, for $\mathbf{q}  = \mathbf{0}$, we have that $\mathbf{\hat{x}}_1 {\ast} \mathbf{x}_2 = \mathbf{\hat{x}}_2 {\ast} \mathbf{x}_1$ which can be written in the $z$-domain as
	\begin{align}
		\hat{X}_1(z) X_2(z) = \hat{X}_2(z) X_1(z).
		\label{eq:zdomain}
	\end{align}
	This relationship implies that there exists a function $G(z)$ such that $X^n(z) = G(z) \hat{X}^n(z)$. Hence, it can be shown that $G(z)$ is a constant function if one of the conditions (A3) or (A3$^\prime$) is satisfied (See \cite[Section IV.E]{mulleti_mbd}). In other words, $\mathbf{\hat{x}}^n$ is a scaled version of $\mathbf{x}^n$ and hence a feasible solution. A similar argument can be used for any pair of sparse filters, and we conclude that identifiability holds for all sparse filters. 
	
	\noindent \emph{Compressed Measurements:} 
	Next, we show that the number of measurements $K<M_y$, i.e., the measurements $\mathbf{z}^n$ are compressed. The identifiability results are derived in terms of spark properties of $\mathbf{A}$. For the spark properties to be satisfied, the length of the compression filter $M_h$ requires satisfying certain conditions, as discussed next. In the discussion, we use the fact that for any matrix to satisfy the minimum spark property, its number of rows should be greater than or equal to the minimum spark. For example, consider a matrix $\mathbf{B} \in \mathbb{C}^{K \times N}$. Then, for any integer $L\leq N$, the condition $\text{Spark}(\mathbf{B})>L$ is satisfied only if $K\geq L$. 
	
	In the sufficiency part, the inequality $\text{Spark}(\mathbf{A}_{(M_y, M_x)}) \leq 2L$ indicates a necessary condition. Hence, for identifiability it is necessary that $\text{Spark}(\mathbf{A}_{(M_y, M_x)})> 2L$ which is ensured only if the number of rows of $\mathbf{A}_{(M_y, M_x)}$ or more specifically, the number of rows of $\mathbf{T}_h(M_y)$ is greater than or equal to $2L$. Since the number of rows of $\mathbf{A}_{(M_y, M_x)}$ determines the number of compressed measurements (see \eqref{eq:z=Ax}), the results imply that identifiability is not possible if the number of measurements $K$, which is equal to the number of rows of $\mathbf{A}_{(M_y, M_x)}$, is less than $2L$. For $K\geq 2L$, the compression filter's length $M_h$ should be chosen such that $M_h \geq 2L+M_y-1$. Since $L< \min\left(\sqrt{M_{s_0}/2}, \sqrt{M_x} \right)$, we have that 
	\begin{align}
	    M_{s_0}>2L^2 \quad \text{and} \quad M_x>L^2.
	    \label{eq:source_filt_sparsiy}
	\end{align}
	By using the above bounds, it can be verified that $2L\leq K<M_y = M_x+M_{s_0}-1 < 3L^2-1$ and hence the measurements are compressed.
	
	For the sufficient conditions, it is required that $\text{Spark}(\mathbf{A}_{(M_y+M_x-1, 2M_x-1)}) > 2L^2$ which in turn means that the number of rows of $\mathbf{T}_h(M_y+M_x-1)$ is greater than or equal to $2L^2$. This can be insured by choosing the length of the filter such that $M_h\geq 2L^2+2M_x+M_{s_0}-3$. From the spark condition in the sufficient part, it may not be straightforward to determine the number of measurements, as the condition is derived for $\mathbf{A}_{(M_y+M_x-1, 2M_x-1)}$ which is different from $\mathbf{A}_{(M_y, M_x)}$ used for compression (see \eqref{eq:z=Ax}). However, by using the condition $M_h\geq 2L^2+2M_x+M_{s_0}-3$ and calculating the number of rows of $\mathbf{A}_{(M_y, M_x)}$, we note that a minimum of $K = M_h-M_y+1 = 2L^2+M_x-1$  measurements is available. For $K = 2L^2+M_x-1$ and $M_y = M_x+M_{s_0}-1$, it can be verified that $K<M_y$ as $2L^2 < M_{s_0}$. Hence, the measurements are compressed.
\end{proof}

The theorem states necessary and sufficient conditions on the source and compression filter in terms of spark properties. In addition, we show that the measurements are compressible if the sparsity level is restricted in terms of the lengths of the source and the sparse filter. In the frequency-domain guarantees derived in \cite{mulleti_mbd}, the sparsity level should satisfy the inequality $L < \sqrt{M_x}$ for identifiability. However, in the time-domain guarantees (\Cref{theorem1}), it should satisfy the condition $L < \min(\sqrt{M_s}, \sqrt{M_x})$ which is a stronger condition if $M_s<M_x$. For $M_s<M_x$, the time-domain results can identify less sparse signals. This is a consequence of the joint identifiability condition on both the compression filter and the MBD measurements. 

Compared to the identifiability condition in \cite{mulleti_mbd}, the results in~\Cref{theorem1} are generic. To be specific, in the frequency-domain identifiability results \cite{mulleti_mbd}, a D-SoS filter is used together with a non-vanishing spectral support condition on the source. Whereas the time-domain results in~\Cref{theorem1} are generic and do not use any structure or condition on the compression filter or source. Hence, the conditions derived can be used to design a wider class of signals for compression. Note that the D-SoS filter, together with the non-vanishing spectral support condition, also satisfies the spark conditions in~\Cref{theorem1} (See Appendix for details). For sufficient guarantees, the approach in \cite{mulleti_mbd} requires minimum $2L^2$ Fourier measurements, whereas the proposed time-domain approach requires a minimum of $2L^2+M_x-1$ compressed measurements. The additional measurements in the proposed approach are due to the joint identifiability condition. Theoretical guarantees were also presented in \cite{balzano_nowak, lee_17, wang_chi, cosse_mbd} for the S-MBD problem where they consider random sparsity models used. Specifically, these models assume that the sparse filters follow a Bernoulli-Gaussian distribution. The models may not be applicable in many applications. In comparison, the proposed approach is deterministic (no randomness) in nature and hence more widely applicable.
		

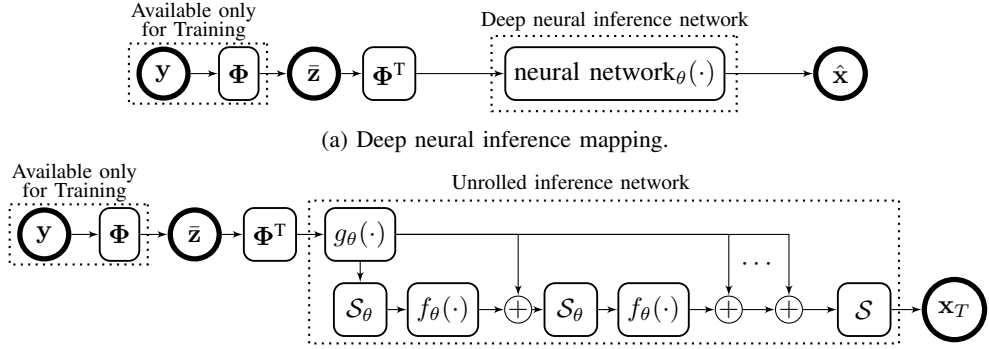
\begin{figure*}[!t]
\begin{subfigure}[t]{0.99\linewidth}
	\begin{minipage}[b]{1.0\linewidth}
		\centering
		\tikzstyle{block} = [draw, fill=none, rectangle, 
		minimum height=2em, minimum width=2em, rounded corners, line width=0.25mm]
		\tikzstyle{sum} = [draw, fill=none, minimum height=0.01em, minimum width=0.01em, inner sep=0pt, circle, node distance=1cm]
		\tikzstyle{cir} = [draw, fill=none, circle, line width=0.7mm, minimum width=0.65cm, node distance=1cm]
		\tikzstyle{loss} = [draw, fill=none, color=black, ellipse, line width=0.5mm, minimum width=0.7cm, node distance=1cm]
		\tikzstyle{input} = [coordinate]
		\tikzstyle{output} = [coordinate]
		\tikzstyle{pinstyle} = [pin edge={to-,thin,black}]
		\begin{tikzpicture}[auto, node distance=2cm,>=latex']
		cloud/.style={
			draw=red,
			thick,
			ellipse,
			fill=none,
			minimum height=1em}
		\node [input, name=input] {};
		\node [cir, node distance=0cm, right of=input] (y) {$\y$};
		\node [block, right of=y,  minimum width=0.5cm, node distance=1.cm] (PHI) {$\mathbf{\Phi}$};
		
		\node [cir, node distance=1.cm, right of=PHI] (z) {$\bar \z$};

		\draw[thick,dotted]     ($(PHI.east)+(+0.1,+0.4)$) rectangle ($(y.west)+(-0.1,-0.4)$);
		
		\node [block, right of=z,  minimum width=0.5cm, node distance=1cm] (projT) {$\mathbf{\proj^{\text{T}}}$};
        
 		\node [block, right of=projT,  minimum width=0.5cm, node distance=3cm] (nn) {$\text{neural network}_{\theta}(\cdot)$};
 
 		\node [cir, right of=nn, node distance=3cm] (x) {$\hat \x$};
 		
		\node [rectangle, fill=none,  node distance=0.82cm,  left=17.0pt,above of=PHI] (training) {\footnotesize{Available only}}; 
		\node [rectangle, fill=none,  node distance=0.57cm,  left=17.0pt,above of=PHI] (training) {\footnotesize{for Training}};
		
		\draw[thick,dotted]     ($(nn.east)+(+0.15,+0.5)$) rectangle ($(nn.west)+(-0.17,-0.5)$);
		\node [rectangle, fill=none,  node distance=0.7cm, above of=nn] (net) {\footnotesize{Deep neural inference network}};

		\draw [->] (y) -- node [] {} (PHI);
		\draw [->] (PHI) -- node [] {} (z);	
		\draw [->] (z) -- node [] {} (projT);
    	\draw [->] (projT) -- node [] {} (nn);

		\draw [->] (nn) -- node[] {} (x);

		\end{tikzpicture}
	\end{minipage}
	\caption{Deep neural inference mapping.}\label{fig:infmap_nn}
\end{subfigure}
\begin{subfigure}[t]{0.99\linewidth}
	\begin{minipage}[b]{1.0\linewidth}
		\centering
		\tikzstyle{block} = [draw, fill=none, rectangle, 
		minimum height=2em, minimum width=2em, rounded corners, line width=0.25mm]
		\tikzstyle{sum} = [draw, fill=none, minimum height=0.01em, minimum width=0.01em, inner sep=0pt, circle, node distance=1cm]
		\tikzstyle{cir} = [draw, fill=none, circle, line width=0.7mm, minimum width=0.65cm, node distance=1cm]
		\tikzstyle{loss} = [draw, fill=none, color=black, ellipse, line width=0.5mm, minimum width=0.7cm, node distance=1cm]
		\tikzstyle{input} = [coordinate]
		\tikzstyle{output} = [coordinate]
		\tikzstyle{pinstyle} = [pin edge={to-,thin,black}]
		\begin{tikzpicture}[auto, node distance=2cm,>=latex']
		cloud/.style={
			draw=red,
			thick,
			ellipse,
			fill=none,
			minimum height=1em}
		\node [input, name=input] {};
		\node [cir, node distance=0cm, right of=input] (y) {$\y$};
		\node [block, right of=y,  minimum width=0.5cm, node distance=1.cm] (PHI) {$\mathbf{\Phi}$};
		
		\node [cir, node distance=1.cm, right of=PHI] (z) {$\bar \z$};

		\draw[thick,dotted]     ($(PHI.east)+(+0.1,+0.4)$) rectangle ($(y.west)+(-0.1,-0.4)$);
		
		\node [block, right of=z,  minimum width=0.5cm, node distance=1cm] (projT) {$\mathbf{\proj^{\text{T}}}$};
        
 		\node [block, right of=projT,  minimum width=0.5cm, node distance=1.2cm] (we) {$g_{\theta}(\cdot)$};
 		
		\node [block, below of=we,  minimum width=0.7cm, node distance=1cm] (relu1) {$\mathcal{S}_{\theta}$};
		\node [block, right of=relu1,  minimum width=0.7cm, node distance=1.1cm] (wd1) {$f_{\theta}(\cdot)$};
		\node [sum, right of=wd1, node distance=1cm] (sum1) {$+$};

		\node [block, right of=sum1,  minimum width=0.7cm, node distance=0.7cm] (relu2) {$\mathcal{S}_{\theta}$};
		\node [block, right of=relu2,  minimum width=0.7cm, node distance=1.1cm] (wd2) {$f_{\theta}(\cdot)$};
		\node [sum, right of=wd2, node distance=1cm] (sum2) {$+$};

		\node [sum, right of=sum2, node distance=0.8cm] (sumb) {$+$};
		
		\node [block, right of=sumb,  minimum width=0.7cm, node distance=1cm] (relu3) {$\mathcal{S}$};

		\node [cir, right of=relu3, node distance=1.2cm] (xt) {$\x_{T}$};
	
		\node [rectangle, right of=sum2, node distance=0.4cm, above=0.4cm] (dots) {$\cdots$};	

		\node [rectangle, fill=none,  node distance=0.82cm,  left=17.0pt,above of=PHI] (training) {\footnotesize{Available only}}; 
		\node [rectangle, fill=none,  node distance=0.57cm,  left=17.0pt,above of=PHI] (training) {\footnotesize{for Training}};
		
		\draw[thick,dotted]     ($(relu3.east)+(+0.1,+1.45)$) rectangle ($(we.west)+(-0.2,-1.45)$);
		\node [rectangle, fill=none,  node distance=1.7cm, above of=relu2] (net) {\footnotesize{Unrolled inference network}};

		\draw [->] (y) -- node [] {} (PHI);
		\draw [->] (PHI) -- node [] {} (z);	
		\draw [->] (z) -- node [] {} (projT);
    	\draw [->] (projT) -- node [] {} (we);
		\draw [->] (we) -- node[] {} (relu1);
		\draw [->] (relu1) -- node[] {} (wd1);
		\draw [->] (wd1) -- node[] {} (sum1);
		\draw [->] (we) -| node[] {} (sum1);
		
		\draw [->] (sum1) -- node[] {} (relu2);
		\draw [->] (relu2) -- node[] {} (wd2);
		\draw [->] (wd2) -- node[] {} (sum2);
		\draw [->] (we) -| node[] {} (sum2);

		\draw [->] (sum2) -- node[] {} (sumb);	
		\draw [->] (we) -| node[] {} (sumb);	
		\draw [->] (sumb) -- node[] {} (relu3);
		\draw [->] (relu3) -- node[] {} (xt);

		\end{tikzpicture}
	\end{minipage}
	\caption{Model-based unrolled inference mapping. Here, $\theta$ denotes the network parameters, and $f_{\theta}$ and $g_{\theta}$ are neural blocks with $\mathcal{S}_{\theta}$ representing the nonlinearity.}\label{fig:infmap_unroll}
\end{subfigure}
	\caption{Inference mapping for compressed multichannel blind-deconvolution.}
	\label{fig:infmap}
\end{figure*}

\begin{figure*}[!t]
\begin{subfigure}[t]{0.9\linewidth}
	\begin{minipage}[b]{1.0\linewidth}
		\centering
		\tikzstyle{block} = [draw, fill=none, rectangle, 
		minimum height=2em, minimum width=2em, rounded corners, line width=0.25mm]
		\tikzstyle{sum} = [draw, fill=none, minimum height=0.01em, minimum width=0.01em, inner sep=0em, circle, node distance=1cm]
		\tikzstyle{cir} = [draw, fill=none, circle, line width=0.7mm, minimum width=0.65cm, node distance=1cm]
		\tikzstyle{loss} = [draw, fill=none, color=black, ellipse, line width=0.5mm, minimum width=0.7cm, node distance=1cm]
		\tikzstyle{input} = [coordinate]
		\tikzstyle{output} = [coordinate]
		\tikzstyle{pinstyle} = [pin edge={to-,thin,black}]
		\begin{tikzpicture}[auto, node distance=2cm,>=latex']
		cloud/.style={
			draw=red,
			thick,
			ellipse,
			fill=none,
			minimum height=1em}
		\node [input, name=input] {};
		\node [cir, node distance=0cm, right of=input] (y) {$\y$};
		\node [block, right of=y,  minimum width=0.5cm, node distance=1.cm] (PHI) {$\mathbf{\Phi}$};
		
		\node [cir, node distance=1.cm, right of=PHI] (z) {$\bar \z$};

		\draw[thick,dotted]     ($(PHI.east)+(+0.1,+0.4)$) rectangle ($(y.west)+(-0.1,-0.4)$);
		
		\node [block, right of=z,  minimum width=0.5cm, node distance=1cm] (projT) {$\mathbf{\proj^{\text{T}}}$};
        
 		\node [block, right of=projT,  minimum width=0.5cm, node distance=1.2cm] (we) {$\alpha\Cs^{\text{T}}$};
 		
		\node [block, below of=we,  minimum width=0.7cm, node distance=1cm] (relu1) {$\mathcal{S}_{\alpha \lambda}$};
		\node [block, right of=relu1,  minimum width=0.7cm, node distance=2.0cm] (wd1) {$\mathbf{I} - \alpha \Cs^{\text{T}}\proj^{\text{T}} \proj \Cs$};
		\node [sum, right of=wd1, node distance=1.8cm] (sum1) {$+$};

		\node [block, right of=sum1,  minimum width=0.7cm, node distance=0.8cm] (relu2) {$\mathcal{S}_{\alpha \lambda}$};
		\node [block, right of=relu2,  minimum width=0.7cm, node distance=2.0cm] (wd2) {$\mathbf{I} - \alpha \Cs^{\text{T}}\proj^{\text{T}} \proj \Cs$};
		\node [sum, right of=wd2, node distance=1.8cm] (sum2) {$+$};

		\node [sum, right of=sum2, node distance=0.8cm] (sumb) {$+$};
		
		\node [block, right of=sumb,  minimum width=0.7cm, node distance=1cm] (relu3) {$\mathcal{S}_{\alpha \lambda}$};

		\node [cir, right of=relu3, node distance=1.2cm] (xt) {$\x_{T}$};
	
		\node [rectangle, right of=sum2, node distance=0.4cm, above=0.4cm] (dots) {$\cdots$};	

		\node [rectangle, fill=none,  node distance=0.82cm,  left=17.0pt,above of=PHI] (training) {\footnotesize{Available only}}; 
		\node [rectangle, fill=none,  node distance=0.57cm,  left=17.0pt,above of=PHI] (training) {\footnotesize{for Training}};
		
		\draw[thick,dotted]     ($(relu3.east)+(+0.1,+1.45)$) rectangle ($(we.west)+(-0.2,-1.45)$);
		\node [rectangle, fill=none,  node distance=1.7cm, above of=relu2] (net) {\footnotesize{Unrolled inference network}};

		\draw [->] (y) -- node [] {} (PHI);
		\draw [->] (PHI) -- node [] {} (z);	
		\draw [->] (z) -- node [] {} (projT);
    	\draw [->] (projT) -- node [] {} (we);
		\draw [->] (we) -- node[] {} (relu1);
		\draw [->] (relu1) -- node[] {} (wd1);
		\draw [->] (wd1) -- node[] {} (sum1);
		\draw [->] (we) -| node[] {} (sum1);
		
		\draw [->] (sum1) -- node[] {} (relu2);
		\draw [->] (relu2) -- node[] {} (wd2);
		\draw [->] (wd2) -- node[] {} (sum2);
		\draw [->] (we) -| node[] {} (sum2);

		\draw [->] (sum2) -- node[] {} (sumb);	
		\draw [->] (we) -| node[] {} (sumb);	
		\draw [->] (sumb) -- node[] {} (relu3);
		\draw [->] (relu3) -- node[] {} (xt);

		\end{tikzpicture}
	\end{minipage}
	\caption{LS-MBD using convolutional ISTA (LS-MBD-CISTA).}\label{fig:ls-mbd-ista}
\end{subfigure}
\begin{subfigure}[t]{0.9\linewidth}
	\begin{minipage}[b]{1.0\linewidth}
		\centering
		\centering
		\tikzstyle{block} = [draw, fill=none, rectangle, 
		minimum height=2em, minimum width=2em, rounded corners, line width=0.25mm]
		\tikzstyle{sum} = [draw, fill=none, minimum height=0.01em, minimum width=0.01em, inner sep=0pt, circle, node distance=1cm]
		\tikzstyle{cir} = [draw, fill=none, circle, line width=0.7mm, minimum width=0.65cm, node distance=1cm]
		\tikzstyle{loss} = [draw, fill=none, color=black, ellipse, line width=0.5mm, minimum width=0.7cm, node distance=1cm]
		\tikzstyle{input} = [coordinate]
		\tikzstyle{output} = [coordinate]
		\tikzstyle{pinstyle} = [pin edge={to-,thin,black}]
		\begin{tikzpicture}[auto, node distance=2cm,>=latex']
		cloud/.style={
			draw=red,
			thick,
			ellipse,
			fill=none,
			minimum height=1em}
		\node [input, name=input] {};
		\node [cir, node distance=0cm, right of=input] (y) {$\y$};
		\node [block, right of=y,  minimum width=0.5cm, node distance=1.cm] (PHI) {$\mathbf{\Phi}$};
		
		\node [cir, node distance=1.cm, right of=PHI] (z) {$\bar \z$};

		\draw[thick,dotted]     ($(PHI.east)+(+0.1,+0.4)$) rectangle ($(y.west)+(-0.1,-0.4)$);
		
		\node [block, right of=z,  minimum width=0.5cm, node distance=1cm] (projT) {$\mathbf{\proj^{\text{T}}}$};
        
 		\node [block, right of=projT,  minimum width=0.5cm, node distance=1.2cm] (we) {$\W_e$};
 		
		\node [block, below of=we,  minimum width=0.7cm, node distance=1cm] (relu1) {$\mathcal{S}_{\bvec}$};
		\node [block, right of=relu1,  minimum width=0.7cm, node distance=2.0cm] (wd1) {$\mathbf{I} - \W_c\proj^{\text{T}} \proj \W_d$};
		\node [sum, right of=wd1, node distance=1.8cm] (sum1) {$+$};

		\node [block, right of=sum1,  minimum width=0.7cm, node distance=0.8cm] (relu2) {$\mathcal{S}_{\bvec}$};
		\node [block, right of=relu2,  minimum width=0.7cm, node distance=2.0cm] (wd2) {$\mathbf{I} - \W_c\proj^{\text{T}} \proj \W_d$};
		\node [sum, right of=wd2, node distance=1.8cm] (sum2) {$+$};

		\node [sum, right of=sum2, node distance=0.8cm] (sumb) {$+$};
		
		\node [block, right of=sumb,  minimum width=0.7cm, node distance=1cm] (relu3) {$\mathcal{S}_{\bvec}$};

		\node [cir, right of=relu3, node distance=1.2cm] (xt) {$\x_{T}$};
	
		\node [rectangle, right of=sum2, node distance=0.4cm, above=0.4cm] (dots) {$\cdots$};	

		\draw[thick,dotted]     ($(relu3.east)+(+0.1,+1.45)$) rectangle ($(we.west)+(-0.2,-1.45)$);
		\node [rectangle, fill=none,  node distance=1.7cm, above of=relu2] (net) {\footnotesize{Unrolled inference network}};
		
		\draw [->] (y) -- node [] {} (PHI);
		\draw [->] (PHI) -- node [] {} (z);	
		\draw [->] (z) -- node [] {} (projT);	
		\draw [->] (projT) -- node [] {} (we);
		\draw [->] (we) -- node[] {} (relu1);
		\draw [->] (relu1) -- node[] {} (wd1);
		\draw [->] (wd1) -- node[] {} (sum1);
		\draw [->] (we) -| node[] {} (sum1);
		
		\draw [->] (sum1) -- node[] {} (relu2);
		\draw [->] (relu2) -- node[] {} (wd2);
		\draw [->] (wd2) -- node[] {} (sum2);
		\draw [->] (we) -| node[] {} (sum2);

		\draw [->] (sum2) -- node[] {} (sumb);	
		\draw [->] (we) -| node[] {} (sumb);	
		\draw [->] (sumb) -- node[] {} (relu3);
		\draw [->] (relu3) -- node[] {} (xt);

		\end{tikzpicture}
	\end{minipage}
	\caption{LS-MBD using learned convolutional ISTA (LS-MBD-LCISTA).}\label{fig:ls-mbd-lista}
\end{subfigure}
\begin{subfigure}[t]{0.9\linewidth}
	\begin{minipage}[b]{1.0\linewidth}
		\centering
		\tikzstyle{block} = [draw, fill=none, rectangle, 
		minimum height=2em, minimum width=2em, rounded corners, line width=0.25mm]
		\tikzstyle{sum} = [draw, fill=none, minimum height=0.01em, minimum width=0.01em, inner sep=0pt, circle, node distance=1cm]
		\tikzstyle{cir} = [draw, fill=none, circle, line width=0.7mm, minimum width=0.65cm, node distance=1cm]
		\tikzstyle{loss} = [draw, fill=none, color=black, ellipse, line width=0.5mm, minimum width=0.7cm, node distance=1cm]
		\tikzstyle{input} = [coordinate]
		\tikzstyle{output} = [coordinate]
		\tikzstyle{pinstyle} = [pin edge={to-,thin,black}]
		\begin{tikzpicture}[auto, node distance=2cm,>=latex']
		cloud/.style={
			draw=red,
			thick,
			ellipse,
			fill=none,
			minimum height=1em}
		\node [input, name=input] {};
		\node [cir, node distance=0cm, right of=input] (y) {$\y$};
		\node [block, right of=y,  minimum width=0.5cm, node distance=1.cm] (PHI) {$\mathbf{\Phi}$};
		
		\node [cir, node distance=1.cm, right of=PHI] (z) {$\bar \z$};

		\draw[thick,dotted]     ($(PHI.east)+(+0.1,+0.4)$) rectangle ($(y.west)+(-0.1,-0.4)$);
		
		\node [block, right of=z,  minimum width=0.5cm, node distance=0.9cm] (projT) {$\mathbf{\proj^{\text{T}}}$};
        
 		\node [block, right of=projT,  minimum width=0.5cm, node distance=1.35cm] (denoiser) {$\text{Denoiser}$};
 		
		\node [block, right of=denoiser,  minimum width=0.5cm, node distance=1.5cm] (we) {$\W_e$}; 
		
		\node [block, below of=we,  minimum width=0.7cm, node distance=1cm] (relu1) {$\mathcal{S}_{\bvec_1}$};
		\node [block, right of=relu1,  minimum width=0.7cm, node distance=2.0cm] (wd1) {$\mathbf{I} - \W_c^1\proj^{\text{T}} \proj \W_d^1$};
		\node [sum, right of=wd1, node distance=1.8cm] (sum1) {$+$};

		\node [block, right of=sum1,  minimum width=0.7cm, node distance=0.8cm] (relu2) {$\mathcal{S}_{\bvec_2}$};
		\node [block, right of=relu2,  minimum width=0.7cm, node distance=2.0cm] (wd2) {$\mathbf{I} - \W_c^2\proj^{\text{T}} \proj \W_d^2$};
		\node [sum, right of=wd2, node distance=1.8cm] (sum2) {$+$};

		\node [sum, right of=sum2, node distance=0.8cm] (sumb) {$+$};
		
		\node [block, right of=sumb,  minimum width=0.7cm, node distance=1cm] (relu3) {$\mathcal{S}_{\bvec_T}$};

		\node [cir, right of=relu3, node distance=1.2cm] (xt) {$\x_{T}$};
	
		\node [rectangle, right of=sum2, node distance=0.4cm, above=0.3cm] (dots) {$\cdots$};	

		\draw[thick,dotted]     ($(relu3.east)+(+0.1,+1.45)$) rectangle ($(we.west)+(-0.2,-1.45)$);
		\node [rectangle, fill=none,  node distance=1.7cm, above of=relu2] (net) {\footnotesize{Unrolled inference network}};
		
		\draw [->] (y) -- node [] {} (PHI);
		\draw [->] (PHI) -- node [] {} (z);	
		\draw [->] (z) -- node [] {} (projT);	
		\draw [->] (projT) -- node [] {} (denoiser);
		\draw [->] (denoiser) -- node [] {} (we);	
		\draw [->] (we) -- node[] {} (relu1);
		\draw [->] (relu1) -- node[] {} (wd1);
		\draw [->] (wd1) -- node[] {} (sum1);
		\draw [->] (we) -| node[] {} (sum1);
		
		\draw [->] (sum1) -- node[] {} (relu2);
		\draw [->] (relu2) -- node[] {} (wd2);
		\draw [->] (wd2) -- node[] {} (sum2);
		\draw [->] (we) -| node[] {} (sum2);

		\draw [->] (sum2) -- node[] {} (sumb);	
		\draw [->] (we) -| node[] {} (sumb);	
		\draw [->] (sumb) -- node[] {} (relu3);
		\draw [->] (relu3) -- node[] {} (xt);

		\end{tikzpicture}
	\end{minipage}
	\caption{LS-MBD using learned convolutional ISTA with untied layers denoiser (LS-MBD-LCISTA-denoiser). }\label{fig:ls-mbd-lista-free-denoiser}
\end{subfigure}
	\label{fig:ls-mbd}
\end{figure*}

Given the above-mentioned identifiability results for the recovery of sparse filters from compressed measurements, we next focus on formulating the optimization problem and providing algorithms to solve S-MBD.

\section{Unrolled Learning for Compressed Multichannel Blind Deconvolution}
\label{sec:unroll}
Given the generality of the compression filter, we propose to learn the compression filter from data in the time domain. Specifically, we introduce a structured inference network based on algorithm unrolling~\cite{monga2019algorithm} to learn a filter-based compression operator for compressed S-MBD. First, we formulate the problem as an alternating-minimization-based optimization and review inference networks in the context of compressive MBD. Then, we introduce constrained and unrolled networks to structure the inference mapping, on which our main results are based.
\subsection{Learned optimization-based methods}
Given the generative model, we form the following optimization problem to recover the source $\s_0$, estimate the sparse filters $\x^n$, and learn a compression operator $\proj$ (i.e., $\mathbf{T}_h(M_y)$):
\begin{equation}\label{eq:opt}
\begin{aligned}
\min_{\proj, \s_0, \{\x^n\}_{n=1}^N}&\ \sum_{n=1}^N \frac{1}{2} \| \bar \z^n - \proj (\s_0 * \x^n) \|_2^2 + \lambda \| \x^n \|_1\\
&\text{s.t.}\ \|\s_0\|_2 = \| \h \|_2 = 1,
\end{aligned}
\end{equation}
where $\lambda$ is a regularization parameter enforcing sparsity, and $\h$ is the filter corresponding to the compression operator $\proj$. $\proj$ is implemented as a convolution operator followed by a truncation. To avoid scaling ambiguity, we impose a norm constraint on the source $\s_0$ and compression filter $\h$. We can solve this problem by alternating minimization. At update iteration $l$, given an estimate of $\s_0^{(l)}$ and $\proj^{(l)}$, solve the sparse coding problem for each example, i.e.,
\begin{equation}\label{eq:opt_sc}
\begin{aligned}
\x^{n(l)} = \argmin_{\x^n} \frac{1}{2} \| \bar \z^n - \proj^{(l)} (\s_0^{(l)} * \x^n) \|_2^2 + \lambda \| \x^n \|_1.
\end{aligned}
\end{equation}
We call this step sparse coding. Then, use the newly estimated sparse filters $\x^{n(l)}$ to refine the learned $\s_0$ and $\h$, i.e.,
\begin{equation}\label{eq:opt_dict}
\begin{aligned}
\{\proj^{(l+1)}, \s^{(l+1)}\} = \argmin_{\proj, \s_0}&\ \sum_{n=1}^N \frac{1}{2} \| \bar \z^n - \proj (\s_0 * \x^{n(l)})\|_2^2\\
&\text{s.t.}\ \|\s_0\|_2 = \| \h \|_2 = 1
\end{aligned}
\end{equation}
which is the learning step. Next, we review inference networks and discuss how to use the above alternating-minimization-based formulation to construct structured deep neural networks for S-MBD.

\subsection{Inference networks}

Following a fully data-driven approach, one may utilize standard deep learning-based inference networks; this is to learn a mapping from compressed measurements $\bar \z$ to the sparse filters $\x$. For training, the full measurements $\y$ are compressed using the trainable compression operator $\proj$, and then $\proj^{\text{T}}$ is applied to have an approximation of $\y$; this is followed by an inference network to output $\hat \x$. The joint learning of $\proj$ and $\x$ is achieved by forward and backward passes. The forward pass infers the sparse filter $\x$, and $\proj$ is learned in the backward pass during training. This approach is used previously by~\cite{mousavi2017deepcodec, mousavi2017learninverse} to recover data from compressed measurements, where the inference mapping is a deep neural network (\cref{fig:infmap_nn}). In this paper, we propose to give an unrolling structure~\cite{monga2019algorithm} to the inference network (\cref{fig:infmap_unroll}); the building blocks $g_{\theta}$, $f_{\theta}$, and $\mathcal{S}_{\theta}$ are determined by the optimization model, chosen to solve the problem. Prior work has demonstrated the competitive performance of unrolled networks, with an order of magnitude lower number of trainable parameters, to a state-of-the-art deep network on the supervised task of Poisson image denoising~\cite{tolooshams2020icml}. To complement this, we show in~\cref{sec:fewshot} that indeed our unrolled learning framework (\cref{fig:infmap_unroll}) outperforms the generic neural network-based inference mappings (\cref{fig:infmap_nn}) in data-limited regimes.

\subsubsection{Constrained unrolled networks}

To construct the inference mapping, we use an approach similar to~\cite{tolooshams2020tnnls} to map the alternating optimization problem into an autoencoder. The forward pass is designed for sparse filter recovery separable among the examples (i.e., channels), and the backward pass is used to learn parameters shared among the dataset (i.e., source and compression filter). We design the forward pass of the encoder to solve the sparse coding step~\eqref{eq:opt_sc} by performing $T$ iterations of iterative soft thresholding algorithm (ISTA)~\cite{daubehies2004ista}. Each iteration implements
\begin{equation}\label{eq:ista}
\x_{t+1} = \mathcal{S}_{\alpha \lambda}(\x_{t} - \alpha \Cs^{\text{T}}\proj^{\text{T}} (\proj \Cs \x_{t} - \bar \z)),
\end{equation}
where $\alpha$ is the step-size, $\mathcal{S}_b(v) = \text{ReLU}_b(v) - \text{ReLU}_b(-v)$ is the shrinkage operator with $\text{ReLU}_b(v) = (v - b) \cdot \mathds{1}_{v \geq b}$, and $\Cs$ is the convolution matrix corresponding to the source $\s_0$. The decoder maps the code estimates $\x_T^n$ into an estimate of the data $\hat \y^n$ and then the compressed data $\hat{\bar{\z}}^n$ using $\s_0$ and $\h$. The learning step~\eqref{eq:opt_dict} minimizes the loss
\begin{equation}
\begin{aligned}
   &\min_{\proj, \Cs}\ \sum_{n=1}^N \frac{1}{2} \| \bar \z^n - \proj \Cs \x_T^n(\Cs, \proj) \|_2^2,\\
    &\text{s.t.}\ \|\s_0\|_2 = \| \h \|_2 = 1
\end{aligned}
\end{equation}
achieved by backpropagated gradient descent, followed by a projection step to ensure the norm constraints. Having access to the full measurements $\y^n$, the learning step~\eqref{eq:opt_dict} can be modified to $\min_{\proj, \Cs}\ \sum_{n=1}^N \frac{1}{2} \| \y - \Cs \x_T^n(\Cs, \proj) \|_2^2$ or $\min_{\proj, \Cs}\ \sum_{n=1}^N \frac{1}{2} \| \x^n - \x_T^n(\Cs, \proj) \|_2^2$, in the supervised setting, with the knowledge of the sparse codes. Indeed, this is possible as the sparse filters are being estimated through the inference network and are the function of $\Cs$ and $\proj$. Hence, the gradient of the loss $\| \x^n - \x_T^n \|_2^2$ with respect to $\Cs$ and $\proj$ can be computed by backpropagation. We use this sparse filter loss to train the proposed architectures.

\cref{fig:ls-mbd-ista} shows the encoder network architecture. In the context of dense dictionary learning and given full measurements, Tolooshams and Ba \cite{tolooshams2022stable} provide a theoretical analysis of such constrained networks; they provide conditions on the network and data under which dictionary recovery is achieved via network training.

\subsubsection{Learned unrolled networks}
The ISTA encoder architecture can be further modified by relaxing and untying the filter operator corresponding to $\Cs$ and $\Cs^{\text{T}}$ as shown in~\cref{fig:ls-mbd-lista}. We call this architecture learned structured multichannel blind deconvolution learned ISTA (LS-MBD-LISTA). This is to increase the capacity of the model to learn from data. Unrolling ISTA iterations and similar relaxations were introduced first by~\cite{gregor2010learning} for dense matrices, where each iteration of ISTA is referred to as a layer of a structured neural network. Our proposed architecture is a variant of LISTA introduced in~\cite{gregor2010learning} where the relaxed operations $\W_e$, $\W_c$, and $\W_d$, each has their own filters $\h_e, \h_c, \h_d$; $\W_e$ and $\W_c$ performs correlation, and $\W_d$ performs convolution. In this case, we can also let the biases of the neural network $\bvec$ be learned from data.

In the presence of measurement noise, we propose to add a differentiable denoiser (e.g., neural network architecture) before the unrolled layers and specify different $\W_c^t$, and $\W_d^t$ kernels and vector bias $\bvec_t$ for each layer $t$ (\cref{fig:ls-mbd-lista-free-denoiser}). We use the variants of our proposed learned structured multichannel blind deconvolution network in the results discussed in the paper.

\subsection{Training}

Similar to the two-stage approach proposed in~\cite{tolooshams2021lsmbd}, we first train the encoder along with a decoder given full measurements (i.e., $\proj = \mathbf{I}$). This is an unsupervised training for source $\Cs$ recovery by minimizing the reconstruction loss $\| \y - \hat \y \|_2^2$ using the LS-MBD-CISTA network encoder architecture. This network architecture is only used for source recovery purposes. For applications where the source is known, this first stage can be skipped. In the second stage, having access to the recovered $\Cs$ and following the generative model~\eqref{eq:mbd1}, we construct a training set consisting of $\{\y^n, \x^n\}_{n=1}^N$ (See \Cref{sec:data} for more info on data generation). LS-MBD-LCISTA or LS-MBD-LCISTA-denoiser are the main networks used for target recovery in this stage. For a given compression ratio $\text{CR} = \frac{M_h - M_y + 1}{M_y}$, we learn the compression operator $\proj$ and the encoder network parameters in a supervised manner by minimizing $\| \x - \hat \x \|_2^2$. We trained one network for each of the studied scenarios. For example, each CR value has its trained network. Moreover, when studying the effect of source shape, unrolling layers, or measurement noise, one network is trained for each specific scenario.

\section{Results}
\label{sec:results}
This section evaluates the performance of our unrolled learning filter-based compression for multichannel blind deconvolution. We outline the data generation details and explain the training procedure. The section characterizes the proposed method and the effect of unrolling on sparse filter recovery. Moreover, we evaluate unrolled learning against the optimization-based method in~\cite{mulleti_mbd}. Specifically, given the source and sparse targets, we show the outperformance of LS-MBD-LCISTA against FS-MBD in the absence of measurement noise. Then, we offer a more efficient compression operator with minimal reduction in performance. At last, we assess our method, specifically LS-MBD-LCISTA-denoiser, in the presence of noise and highlight that using unrolled learning to structure the inference mapping makes it possible to train networks in data-limited regimes.

 \subsection{Simulated data generation and evaluation criteria}\label{sec:data}
 
 We generated training simulated data with the following parameters ($N=10{,}000, M_{s_0} = 99, M_x = 100, x_s=6$ where the code is $x_s = \|\x\|_0$-sparse.) For most of the experiments, we used a Gaussian source shape, $\s_0[k] = e^{-\frac{6}{5 \sqrt{M_{s_0}}}} (k - \lfloor\frac{M_{s_0}}{2}\rfloor)^2$ where $k \in \{0, 2, \ldots, M_{s_0}-1\}$, followed by a normalization step $\| \s_0 \|_2 = 1$. Gaussian source shapes are mainly concentrated in low frequencies. Hence, to highlight the robustness of our method toward source shapes containing narrow frequency bands, for a portion of the experiments, we switched to a symmetric Morlet wavelet with a scaling factor of $1.5$. The choice of source shape is the same during both training and inference. The sparse filter supports are chosen uniformly at random, with amplitudes drawn from $\text{Unif}(4,9)$. For the noisy experiments, we added Gaussian noise, denoted by $\ve$, with various $\sigma$ and computed the SNR of the signal as $20 \log{\frac{\| \s_0 * \x \|_2}{\| \ve \|_2}}$. Specifically, given the same data distribution across all channels, the reported SNR is averaged over the channel (data examples).
 
 We followed a similar approach and generated two additional test and validation sets, each of size $N=1{,}000$. The validation set is used to choose the best-performing network across the training epochs. All datasets of training, test, and validation follow the same data distribution, i.e., they share similar sparse filter statistics and source shapes.
 
We refer to the sparse filters as code or the target of interest, and evaluate the performance of the methods using three metrics: normalized root-mean-squared error (NRMSE) between the estimate and ground-truth code, exact hit rate $\frac{\text{number of correctly detected targets}}{\text{actual number of targets}}$, and approximate hit rate which we allow a tolerance of one (e.g., for ground-truth code with a support set of $\{7, 45, 90\}$ and estimated code with a support set of $\{7, 46, 89\}$, the approximate hit rate is $100\%$). In this definition, the actual target refers to the support (non-zero entries) of the ground-truth code, and the detected targets refers to the support of the estimated code. We expect a lower NRMSE, and a higher hit rate as the compression ratio CR increases. We report results after picking $x_s = 6$ highest elements of the code support and setting the rest of the code entries to $0$.

\subsection{Training}

The networks are implemented and trained in PyTorch. The convolution weights are randomly initialized using a standard Normal distribution, and the biases, if trained, are set to zero. For LS-MBD-CISTA, we set $\alpha = 0.05$, and for the rest of the networks, $\W_e$ and $\W_c$ are scaled by a factor of $0.05$ after initialization to adjust to the step size. For all networks except LS-MBD-CISTA, the step size $\alpha$ is absorbed, hence learned within the weight $\W_c$. We unrolled all networks for up to $T=30$ layers unless stated. Given the non-negatively of the sparse filters, we set $\mathcal{S} = \text{ReLU}$.

All networks are trained for $10{,}000$ epochs with a batch size of $100$. We use the ADAM optimizer with a learning rate of $10^{-3}$ and $\epsilon = 10^{-14}$~\cite{kingma2014adam}; the learning rate is decayed by a factor of $0.9$ in the middle of the training. The small value of $\epsilon$ is to ensure the exploration of the loss landscape without staying in local minima. The chosen learning rate is the common value used for unrolled networks in the literature~\cite{tolooshams2022stable}.

\subsection{Effect of unrolling}

We characterize the proposed methods as a function of unrolling, i.e., how well the target locations and amplitudes are recovered as the network is unrolled. We used the Gaussian source shape for this section and the architecture type of LS-MBD-LCISTA. We compared unrolling networks of $T = 5$ and $T = 30$ for a range of compression ratios. \cref{fig:unroll} demonstrates the results: the amount of unrolling is crucial for amplitude recovery regardless of compression ratio and exact hit rate, especially when CR is low. The reported results are the average of two independent trials.


\begin{figure}[!t]
\centering
\begin{subfigure}[t]{0.49\linewidth}
\centering
\includegraphics[width=0.999\linewidth]{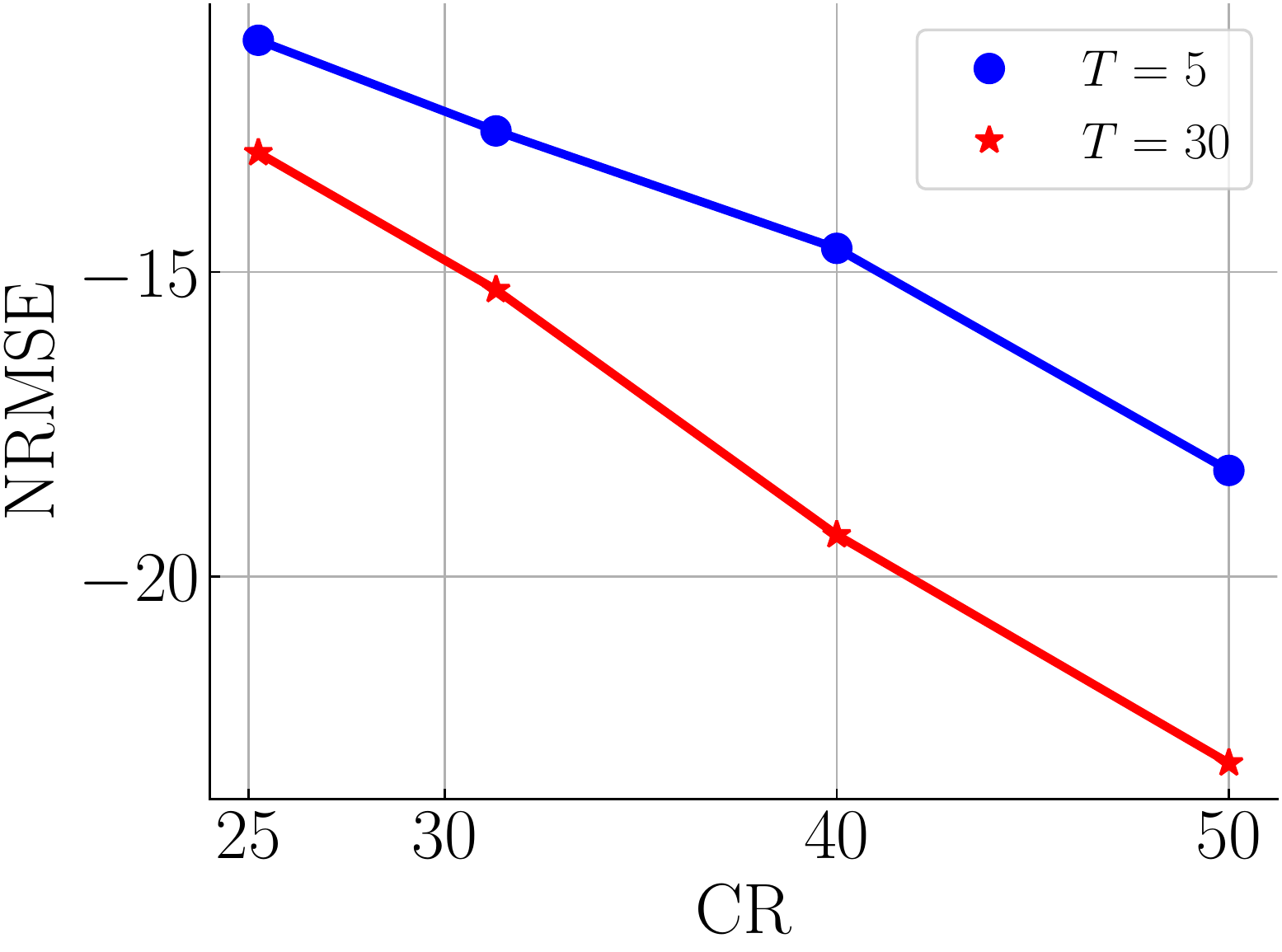}
\caption{Code NRMSE}
\label{fig:unroll_rmse}
\end{subfigure}
\begin{subfigure}[t]{0.49\linewidth}
\centering
\includegraphics[width=0.999\linewidth]{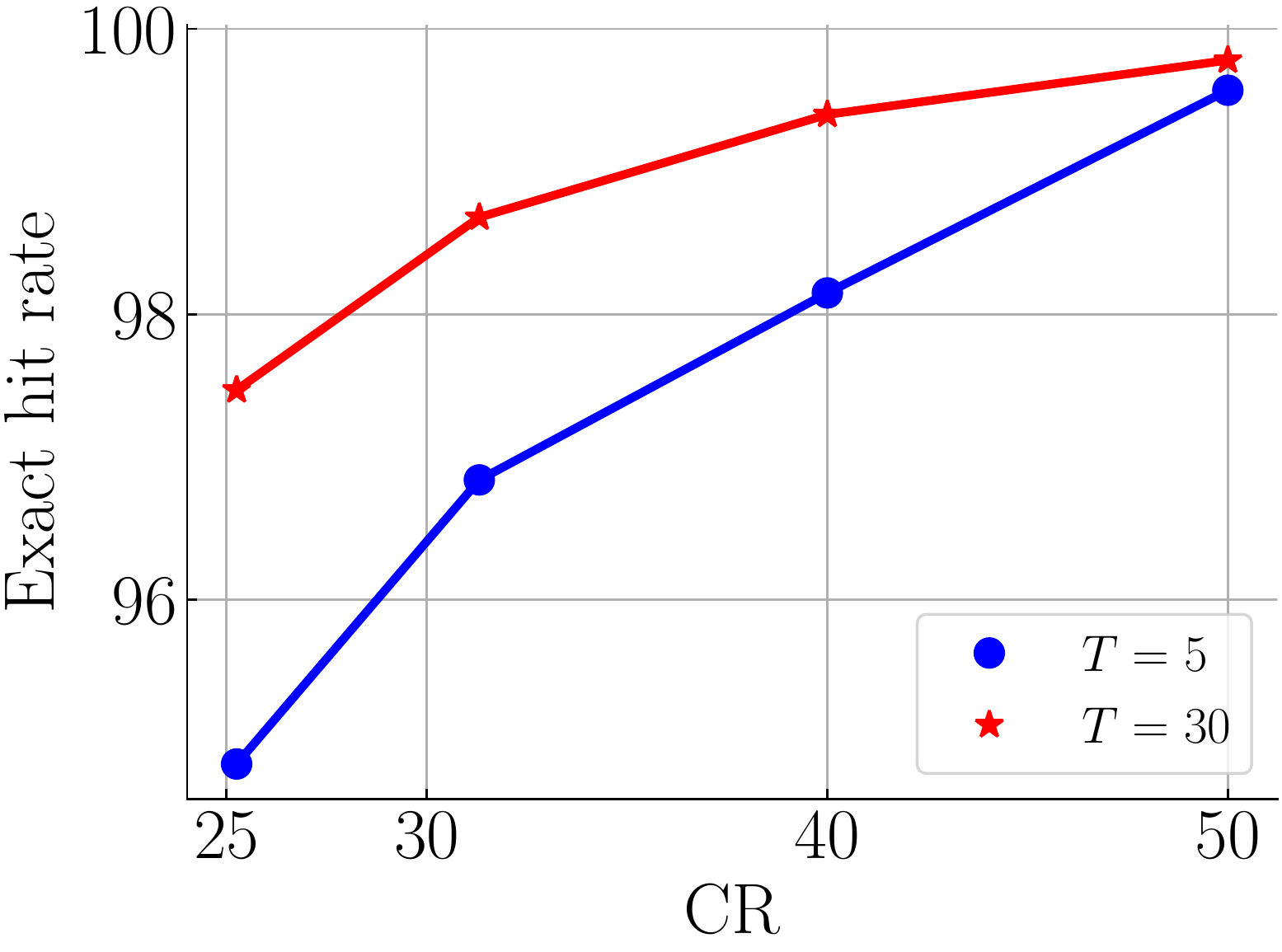}
\caption{Code exact hit rate}
\label{fig:unroll_hitrate0}
\end{subfigure}
\begin{subfigure}[t]{0.49\linewidth}
\centering
\includegraphics[width=0.999\linewidth]{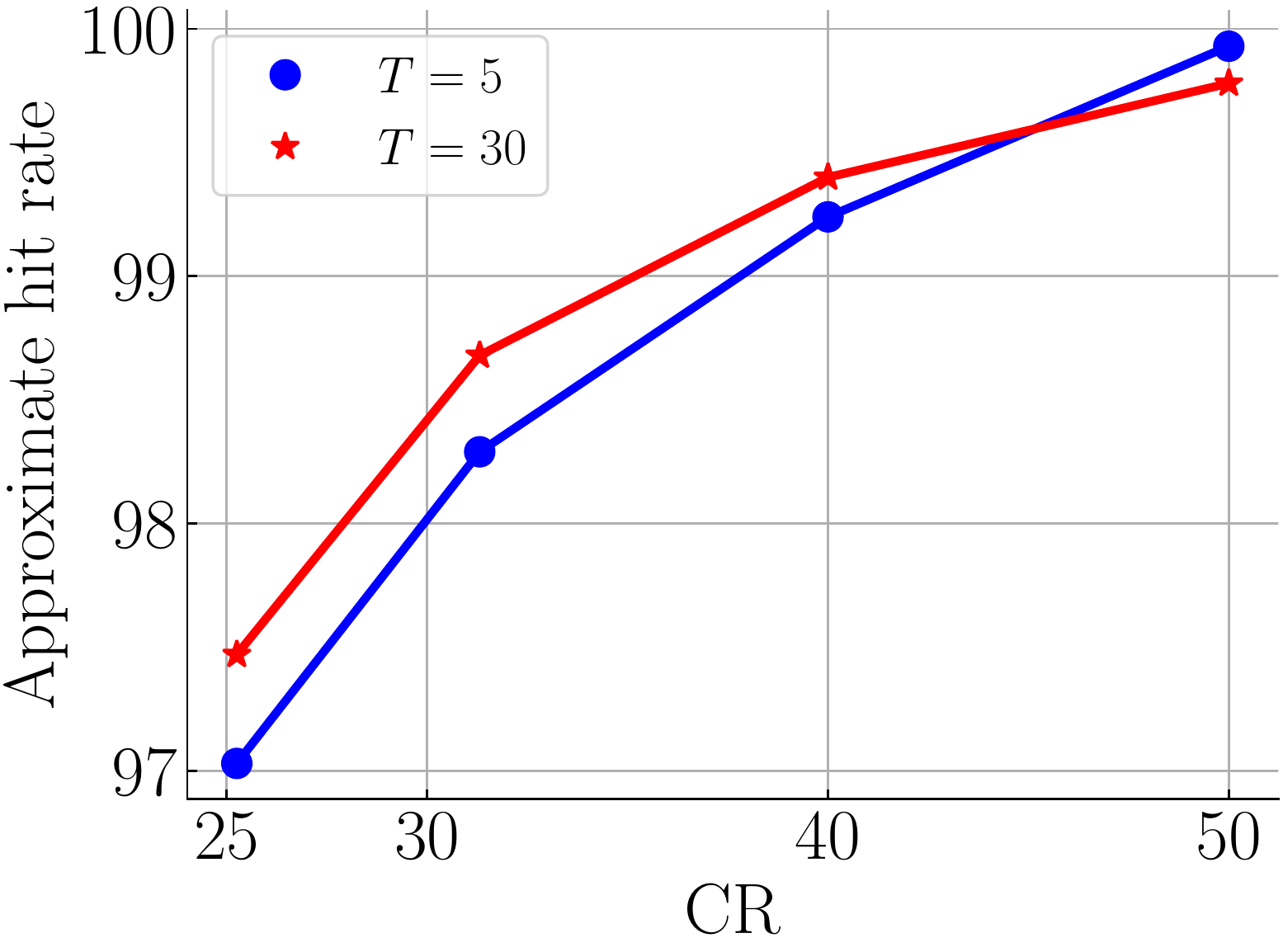}
\caption{Code approximate hit rate}
\label{fig:unroll_hitrate1}
\end{subfigure}
\caption{Effect of Unrolling.}
\label{fig:unroll}
\end{figure}

\subsection{Unrolled learning vs. optimization-based methods}

We compare the proposed network LS-MBD-LCISTA (\cref{fig:ls-mbd-lista}) against the optimization-based method of fixed and structured MBD (FS-MBD)~\cite{mulleti_mbd}; FS-MBD uses source information in the Fourier domain to design a compression matrix $\proj$, uses a blind-dictionary calibration for structured source estimation, and solve the sparse coding step using FISTA~\cite[Ch. 2]{palomar_eldar_2009}\cite{beck2009fast}. Concerning the problem formulation in this manuscript, FS-MBD solves~\eqref{eq:opt} over sparse targets and the source in the frequency domain with known designed compression. For a fair comparison, we assume both of the methods know the source shape; hence, the problem reduces to designing/learning the compression.

We use two different source shapes (Gaussian and symmetric wavelet). For both scenarios and in all compression ratios, 
LS-MBD-LCISTA outperforms FS-MBD. This result conveys two main messages. First, the learnable network parameters, especially the biases, allow LS-MBD-LCISTA to estimate the sparse filters in terms of NRMSE (\cref{fig:fsmbdbaseline_rmse}) much better than FS-MBD which used $\ell_1$-based sparse coding with fixed sparsity level. Second, our method is robust to source shape; the out-performance of LS-MBD-LCISTA against FS-MBD is significant when the source has a wavelet shape (i.e., the frequency content is not concentrated around low frequencies and contains narrow bands). \cref{fig:fsmbdbaseline} visualizes this result in terms of both NRMSE and exact hit rate.

\begin{figure}[!t]
\centering
\begin{subfigure}[t]{0.49\linewidth}
\centering
\includegraphics[width=0.999\linewidth]{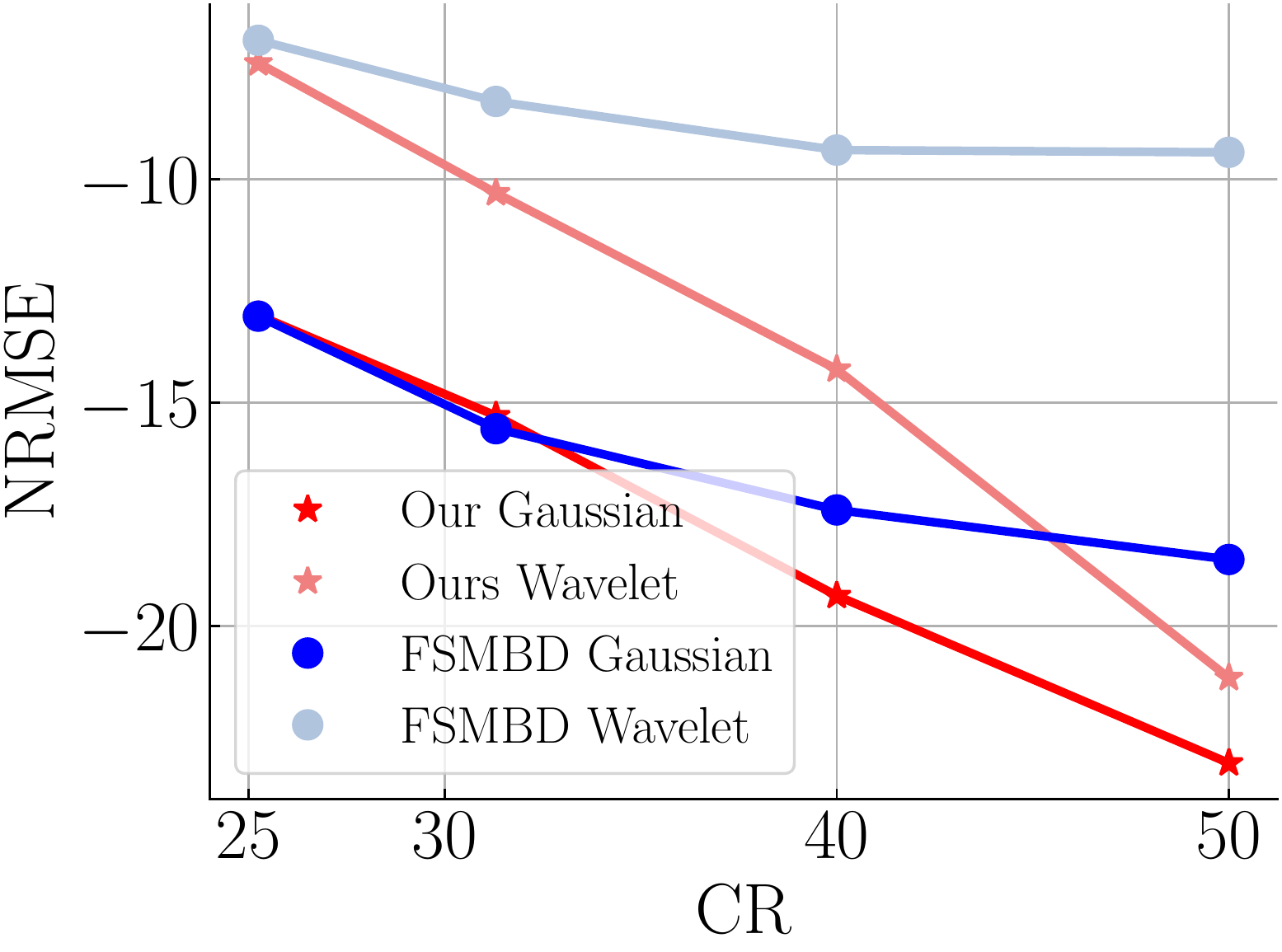}
\caption{Code NRMSE}
\label{fig:fsmbdbaseline_rmse}
\end{subfigure}
\begin{subfigure}[t]{0.49\linewidth}
\centering
\includegraphics[width=0.999\linewidth]{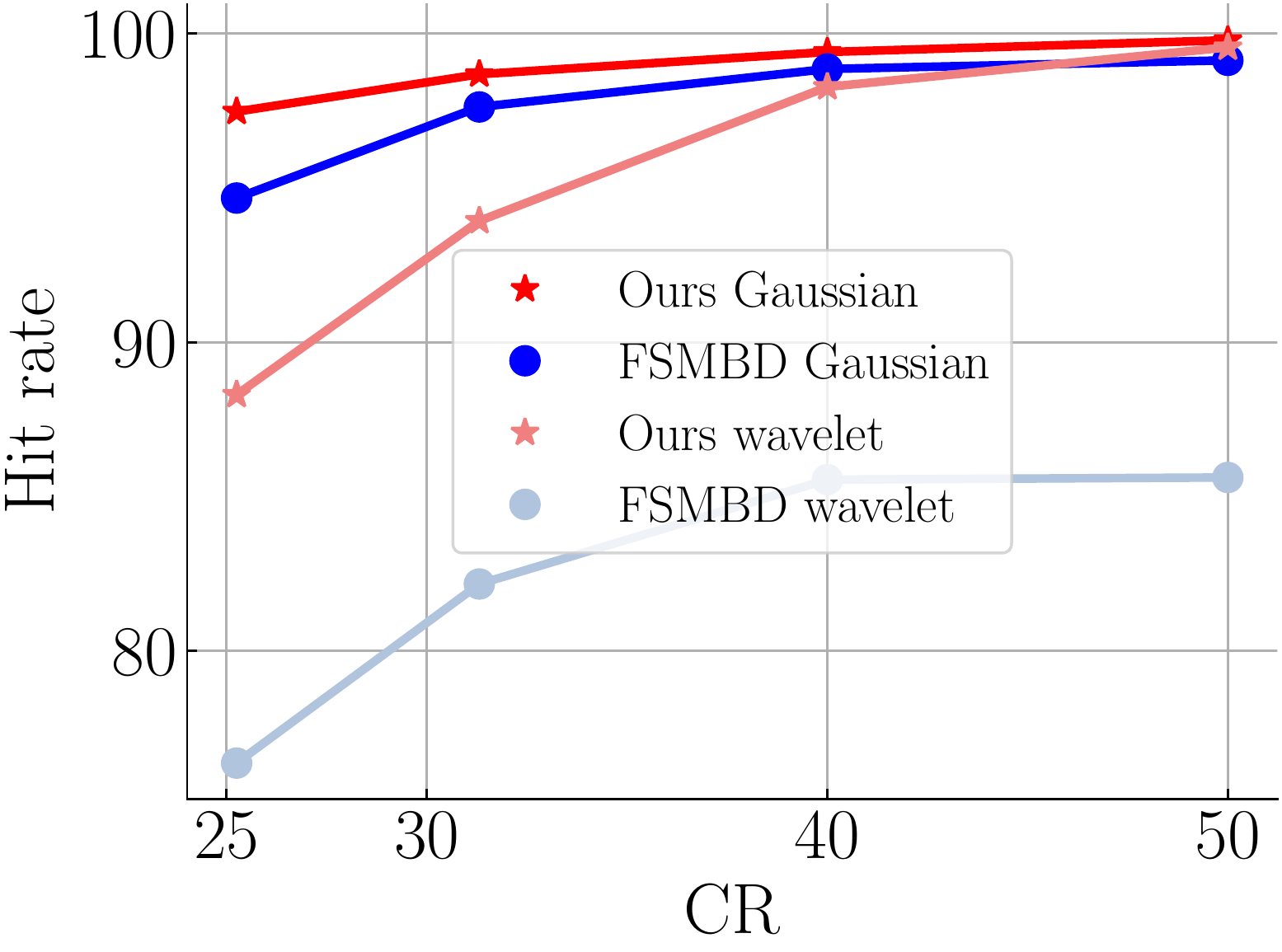}
\caption{Code hit rate}
\label{fig:fsmbdbaseline_hitrate}
\end{subfigure}
\caption{Ours vs. FS-MBD. Our method is more robust to the source shape. FS-MBD only works well when the source is concentrated in low-freq and is not narrow in freq.}
\label{fig:fsmbdbaseline}
\end{figure}

\subsection{Toward more efficient compression operator}

Although the proposed compression operator is efficient compared to unstructured matrix compressions as discussed in~\cite{tolooshams2021lsmbd}, the method has one limitation; The length of the filter $M_h$ must be greater than the length of the signal, and this makes the method not suitable for long data. In this section, we ameliorate this by re-designing the compression operator as follows: the compression performs a sum of $P$ correlations (shown in~\eqref{eq:comp_bank}) where their kernel size $M_h$ is much smaller than the size of the signal. Then, the compression is achieved by decimation (this decimation is also referred to as strides). The desired compression ratio can be achieved by changing the number of convolutional filters and the decimation factor.
\begin{equation}\label{eq:comp_bank}
\bar \z = \proj \y = \sum_{p=1}^P \mathbf{D}_d (\h_p * \y)
\end{equation}
where $\mathbf{D}_d$ denotes the decimation operator matrix by a factor of $d$. Hence, the compression $\proj$ is implemented as a sum of convolutions with strides. We call this method ``filter banks'', and refer to the former one as ``long''. \cref{fig:bank} shows that both methods show similar code recovery performance for various ranges of compression ratios. The data used for this experiment has a symmetric wavelet source shape, and the filter bank compression uses kernels of size $M_h=9$ with stride $9$. For example, for $\text{CR} = \%55.5$, we used $5$ convolutional filters, or for $\text{CR} = \%22.2$, only two filters are used. The architecture used in the above comparison is LS-MBD-LCISTA.

Given the comparable performance, \cref{tab:memory} highlights the memory advantage of the filter banks method compared to the long method. For example, given the generated data (see~\cref{sec:data}) for the long method when $\text{CR} = 50$, the compression length is $M_h = 296$ which is almost twice as large as the signal length. However, for the filter banks methods, compression of $\text{CR} = 55.5$ can be achieved by $5$ filters of each length $M_h = 9$ with decimation (stride) of $9$ (This is indeed independent of the length of the signal). We compute the memory cost based on the dimensionality of the compression operator. This comes down to much smaller memory storage of $5 \times 9 = 45$. For this scenario, \Cref{fig:data_vis} visualizes a test example of full measurements (a), the ground-truth and estimated sparse filters code (b), the source shape (c), the compressed data (d), and the learned compression filter banks (e).

\begin{figure}[!t]
\centering
\begin{subfigure}[t]{0.49\linewidth}
\centering
\includegraphics[width=0.999\linewidth]{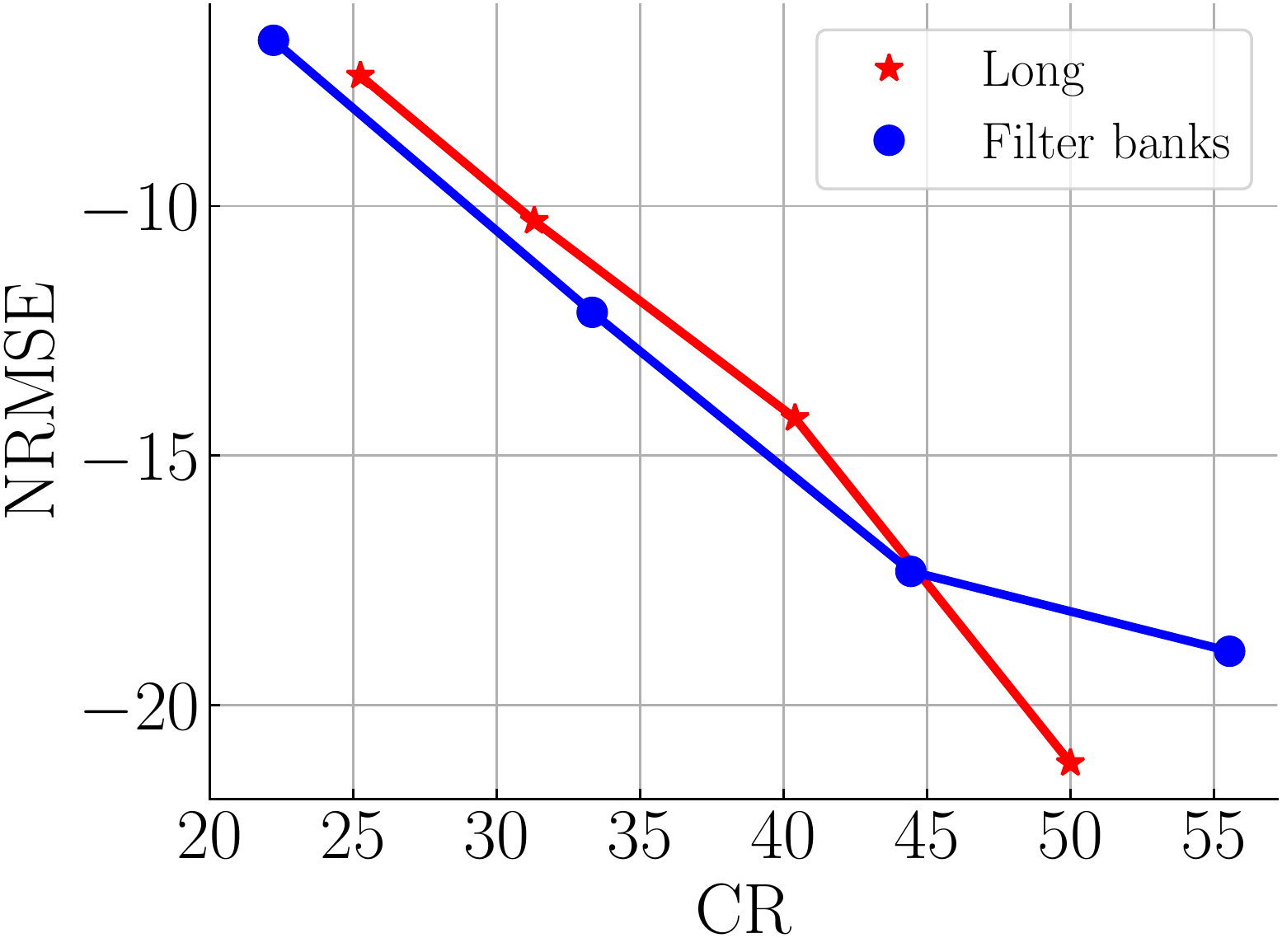}
\caption{Code NRMSE}
\label{fig:bank_rmse}
\end{subfigure}
\begin{subfigure}[t]{0.49\linewidth}
\centering
\includegraphics[width=0.999\linewidth]{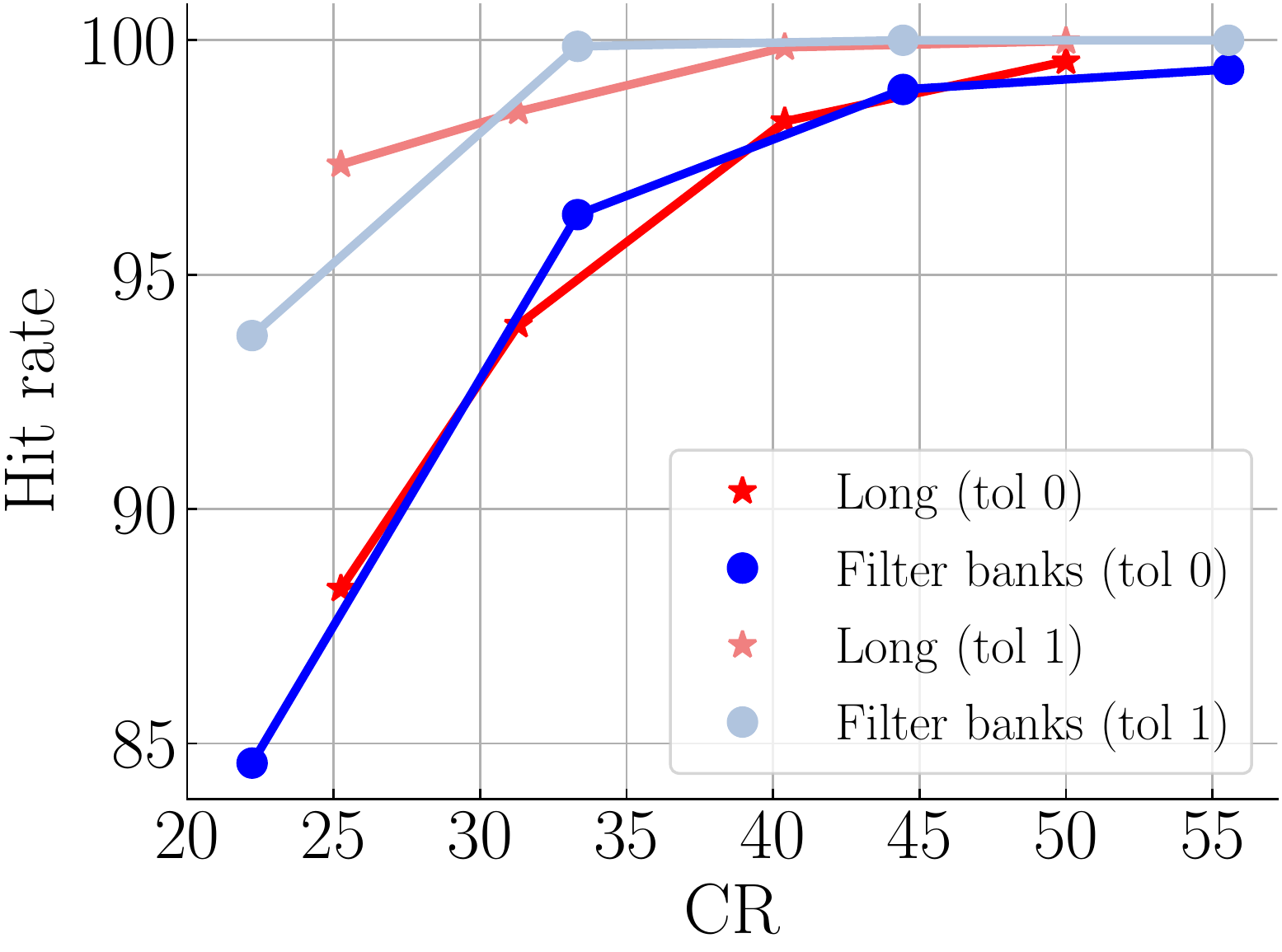}
\caption{Code hit rate}
\label{fig:bank_hitrate}
\end{subfigure}
\caption{Long and filter bank compression filters show similar performance.}
\label{fig:bank}
\end{figure}

\begin{table}[!t]
  \caption{Memory storage of the compression operator when it is structure long or filter banks. Memory cost is quantified by the dimensionality of the compression operator.}
   \centering
  \begin{tabular}{|c|c||c|c|}
    \hline
    \multicolumn{2}{|c||}{Long} & \multicolumn{2}{c|}{Filter banks}\\
    \hline
    CR & Memory Cost & CR &  Memory Cost\\
   \hline
    50 & 296 & 55.55 & 5 $\times$ 9 = 45 \\
    \hline
    40.4 & 277 & 44.44  & 4 $\times$ 9 = 36\\
     \hline
    31.31 & 259 & 33.33   & 3 $\times$ 9 = 27\\ 
     \hline
    25.25 & 247 & 22.22  & 2 $\times$ 9 = 18\\ 
    \hline
  \end{tabular}
  \label{tab:memory}
\end{table}

\begin{figure}[!t]
\centering
\begin{subfigure}[t]{0.49\linewidth}
\centering
\includegraphics[width=0.999\linewidth]{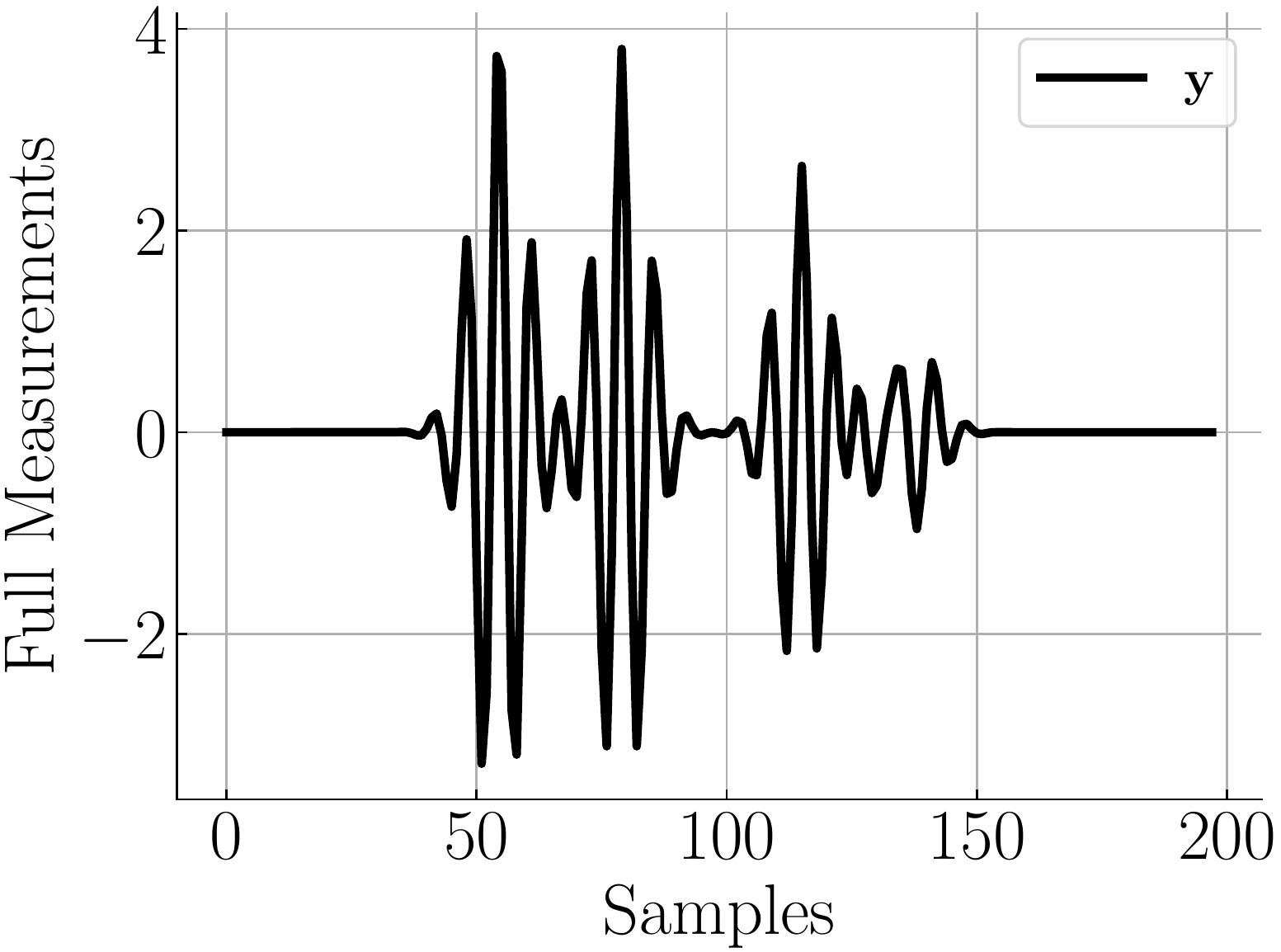}
\caption{Full measurements.}
\label{fig:data_vis_y}
\end{subfigure}
\begin{subfigure}[t]{0.49\linewidth}
\centering
\includegraphics[width=0.999\linewidth]{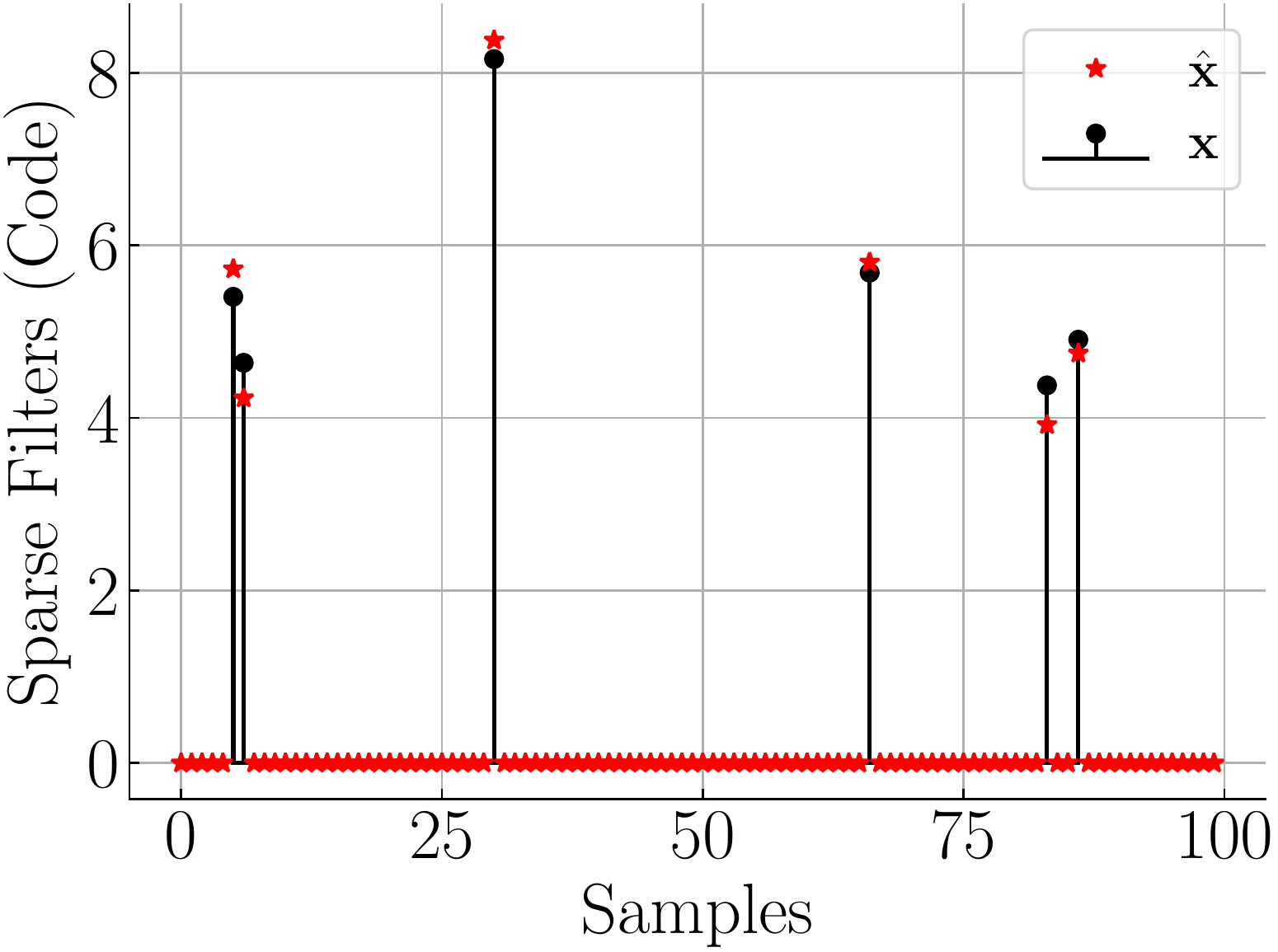}
\caption{Ground-truth and estimated sparse filters (codes).}
\label{fig:data_vis_x}
\end{subfigure}
\begin{subfigure}[t]{0.49\linewidth}
\centering
\includegraphics[width=0.999\linewidth]{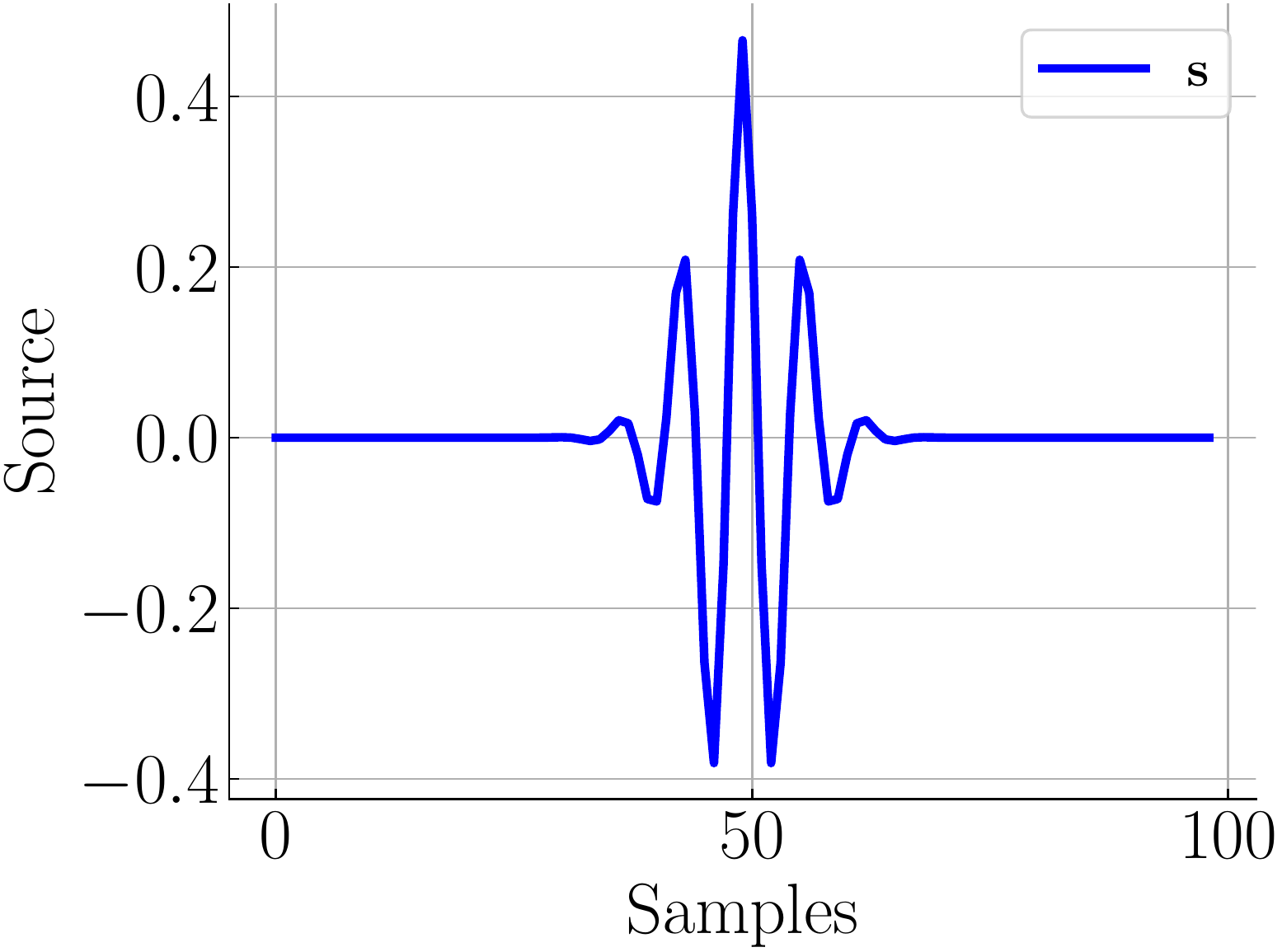}
\caption{Source shape.}
\label{fig:data_vis_s}
\end{subfigure}
\begin{subfigure}[t]{0.49\linewidth}
\centering
\includegraphics[width=0.999\linewidth]{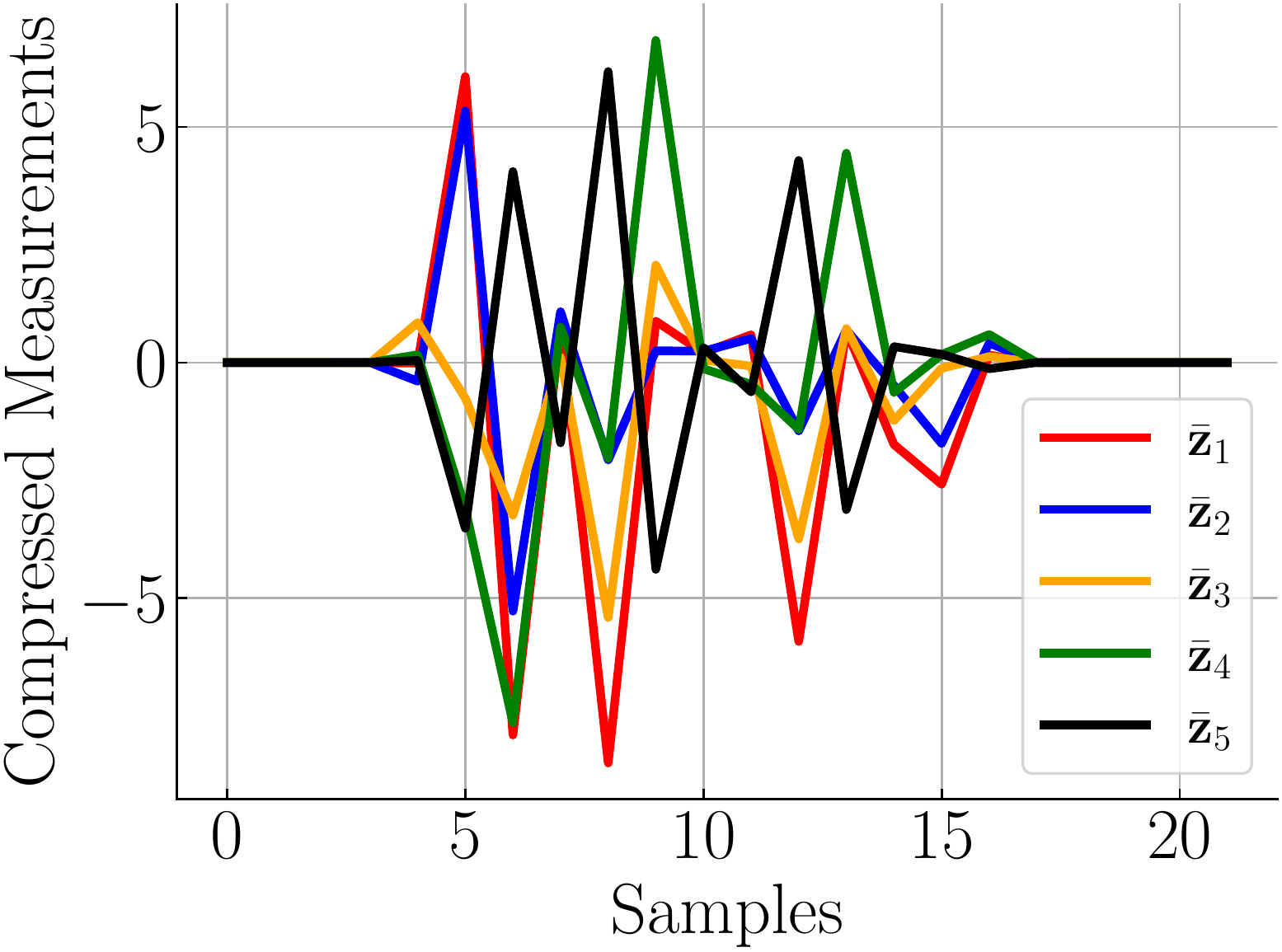}
\caption{Compressed data.}
\label{fig:data_vis_z}
\end{subfigure}
\begin{subfigure}[t]{0.49\linewidth}
\centering
\includegraphics[width=0.999\linewidth]{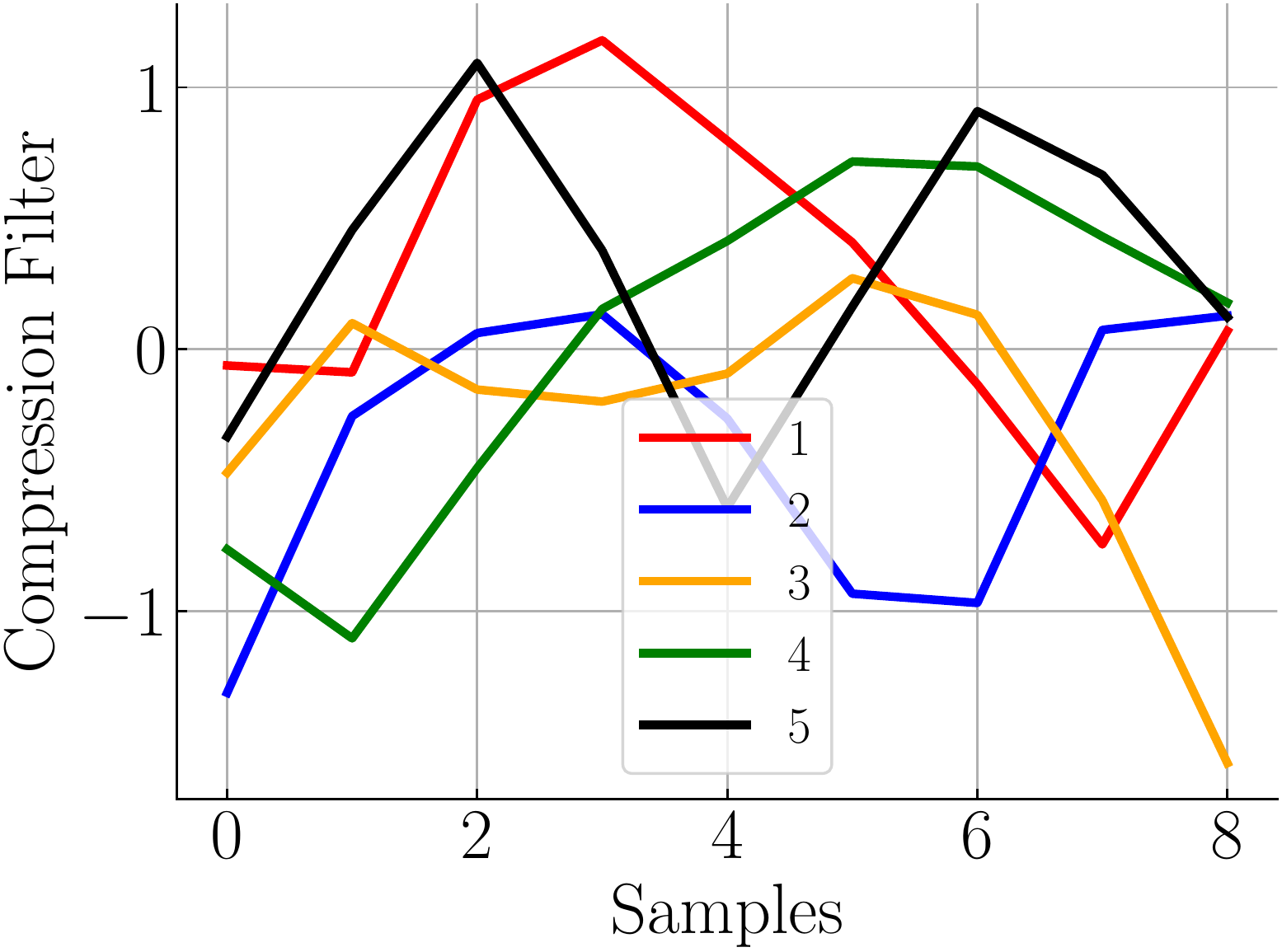}
\caption{Learned compression filters.}
\label{fig:data_vis_phi}
\end{subfigure}
\caption{Data visualizations on LS-MBD-LCISTA when the compression method of filter bank is used. The source has a wavelet shape, and the compression ratio is $\%55.55$.}
\label{fig:data_vis}
\end{figure}

\subsection{Toward data-driven unrolled denoising networks}


This section characterizes the performance of the proposed method in the presence of noise. The generated data contains the Gaussian source shape and the target amplitudes are following Uniform distribution $\text{Unif}(4,5)$. We trained the networks using $10{,}000$ training examples. We considered the case where the data is corrupted with additive Gaussian noise with $\sigma = 0.161$ (i.e., SNR of approximately $15$ dB). We used LS-MBD-LCISTA-freelayers-denoiser with $T=10$ unrolled layers and DnCNN~\cite{zhang2017beyond} as denoiser. The denoiser has a kernel size of $3$, $64$ feature maps, and a depth of $4$. We use validation and test sets of size $N=100$ where each example has its noise realization. The reported results are from the test set. For example, the reported results for the hit rate are averaged over $100$ examples. \cref{fig:noisy} shows that adding a denoiser to the unrolled network results in an improved performance both in terms of NRMSE and exact hit rate.


For when SNR is $15$ dB, we studied the effect of unrolling on denoising and target recovery. We trained LS-MBD-LCISTA-denoiser when $T=2$, $5$, and $10$. \Cref{fig:noisy_effect_of_unrolling} demonstrates that increasing the unrolling layers improves the performance.

\begin{figure}[!t]
\centering
\begin{subfigure}[t]{0.49\linewidth}
\centering
\includegraphics[width=0.999\linewidth]{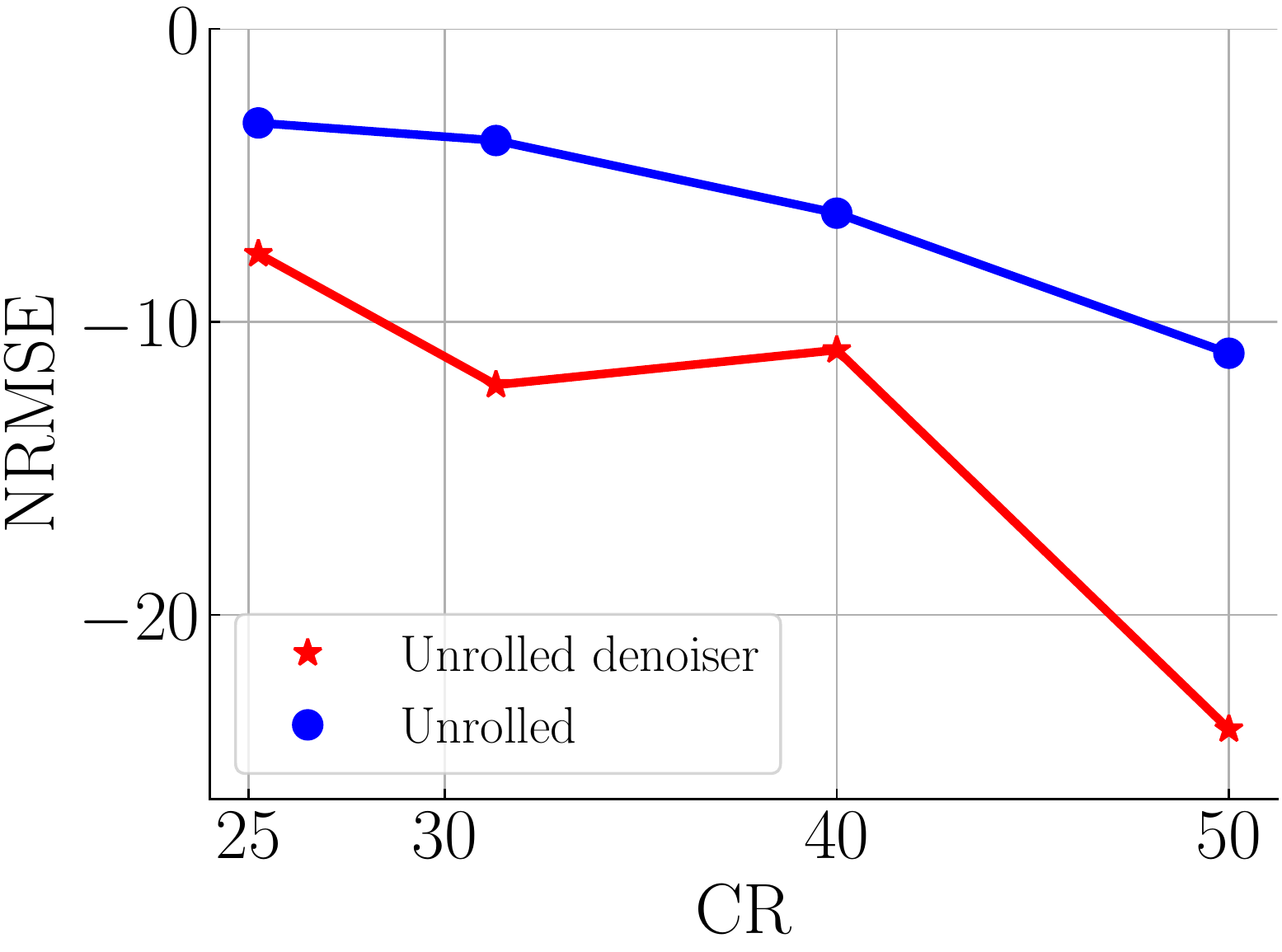}
\caption{Code NRMSE (SNR 15 dB).}
\label{fig:noisy_rmse_15}
\end{subfigure}
\begin{subfigure}[t]{0.49\linewidth}
\centering
\includegraphics[width=0.999\linewidth]{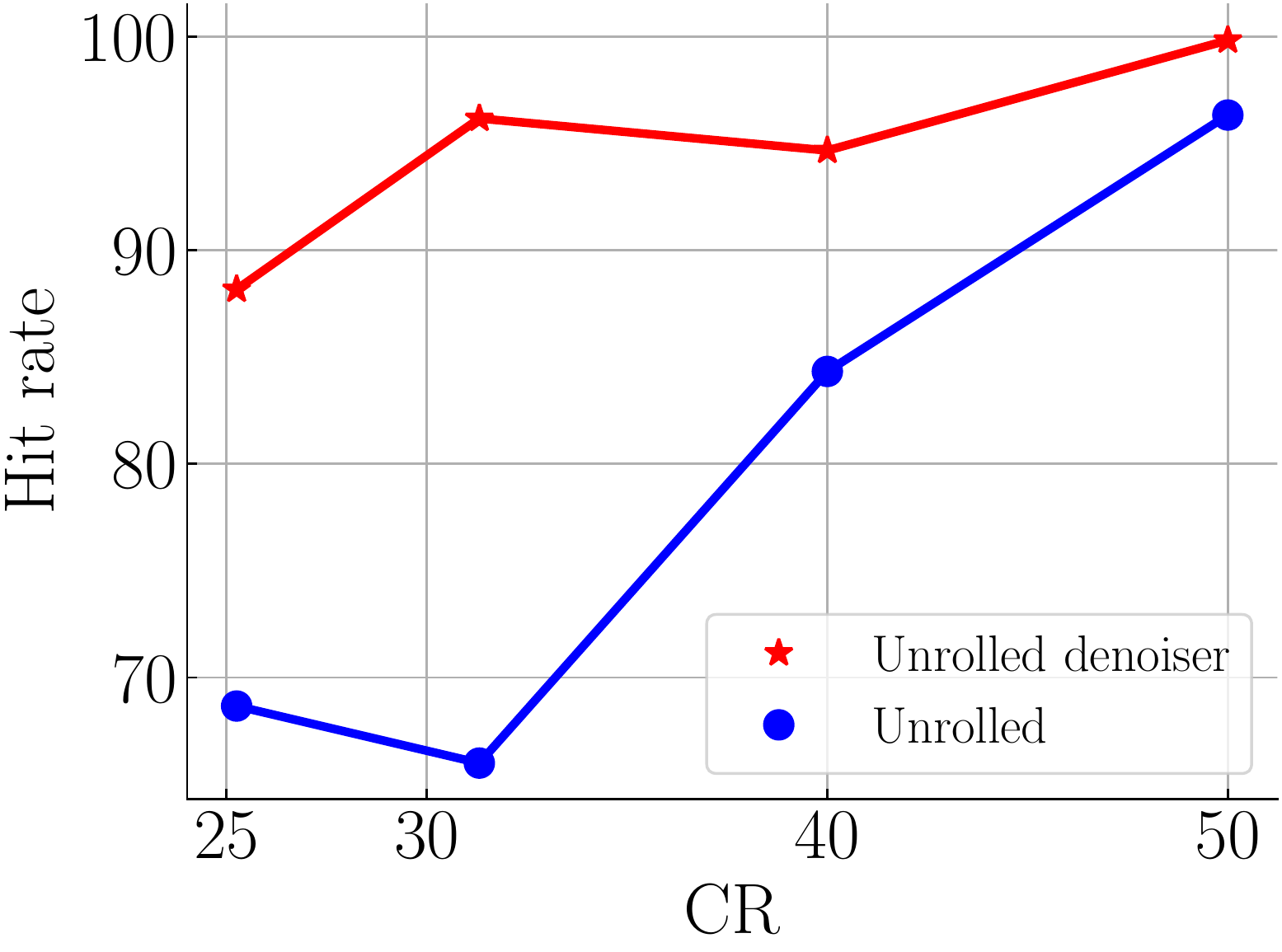}
\caption{Code hit rate (SNR 15 dB).}
\label{fig:noisy_hitrate_15}
\end{subfigure}
\begin{subfigure}[t]{0.49\linewidth}
\centering
\includegraphics[width=0.999\linewidth]{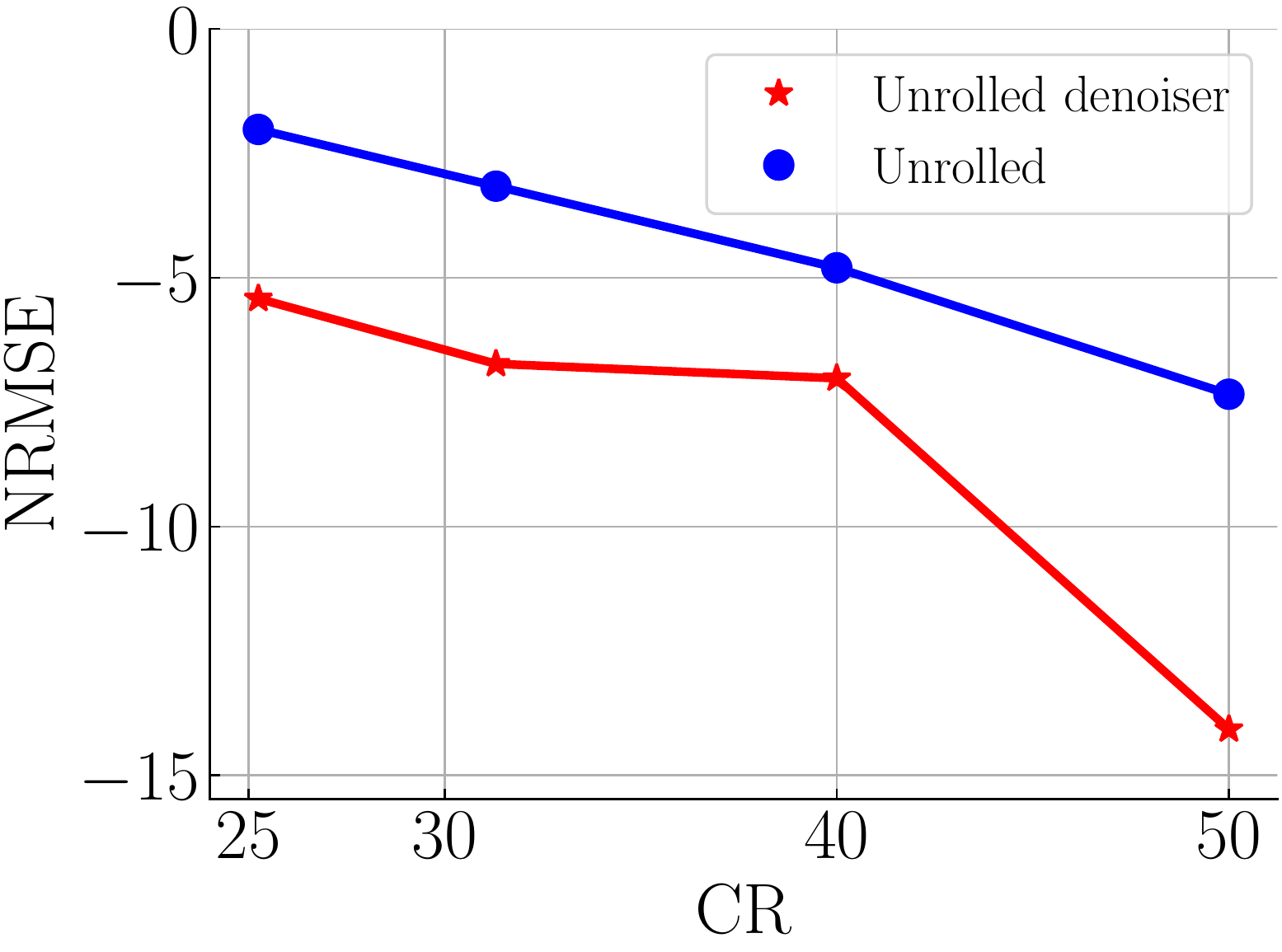}
\caption{Code NRMSE (SNR 10 dB).}
\label{fig:noisy_rmse_10}
\end{subfigure}
\begin{subfigure}[t]{0.49\linewidth}
\centering
\includegraphics[width=0.999\linewidth]{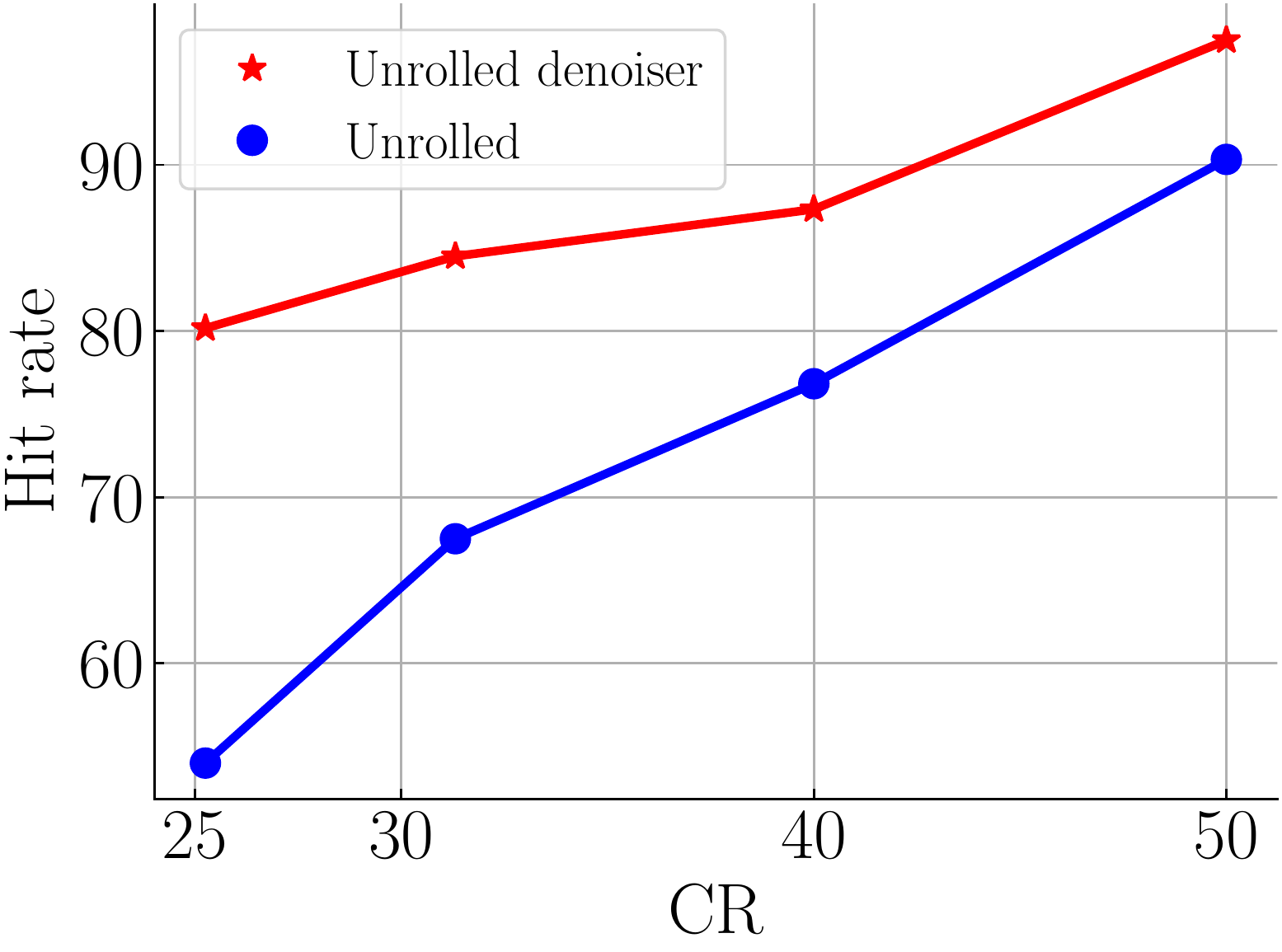}
\caption{Code hit rate (SNR 10 dB).}
\label{fig:noisy_hitrate_10}
\end{subfigure}
\caption{Outperformance of LS-MBD denoiser against LS-MBD in the presence of additive Gaussian noise. For SNR of $15$ and $10$ dB, $\sigma = 0.161$ and $0.297$, respectively. When SNR is $10$ dB, the variance of the reported average hit rate for LS-MBD-LCISTA is $1.25$, $2.52$, $3.39$, and $3.21$ for a compression ratio of $50$, $40$, $31.31$, and $25.25$, respectively. For the same scenario, the variance over the reported hit rate of LS-MBD-LCISTA-denoiser is $0.30$, $3.05$, $2.76$, and $2.49$. Moreover, when SNR is $15$ dB, the variance of the reported average hit rate for LS-MBD-LCISTA is $0.65$, $3.13$, $2.4$, $3.49$ for a compression ratio of $50$, $40$, $31.31$, and $25.25$, respectively. For the same scenario, The variance over the reported hit rate of LS-MBD-LCISTA-denoiser is $0.30$, $0.96$, $0.73$, and $2.32$.}
\label{fig:noisy}
\end{figure}

\begin{figure}[!t]
\centering
\begin{subfigure}[t]{0.49\linewidth}
\centering
\includegraphics[width=0.999\linewidth]{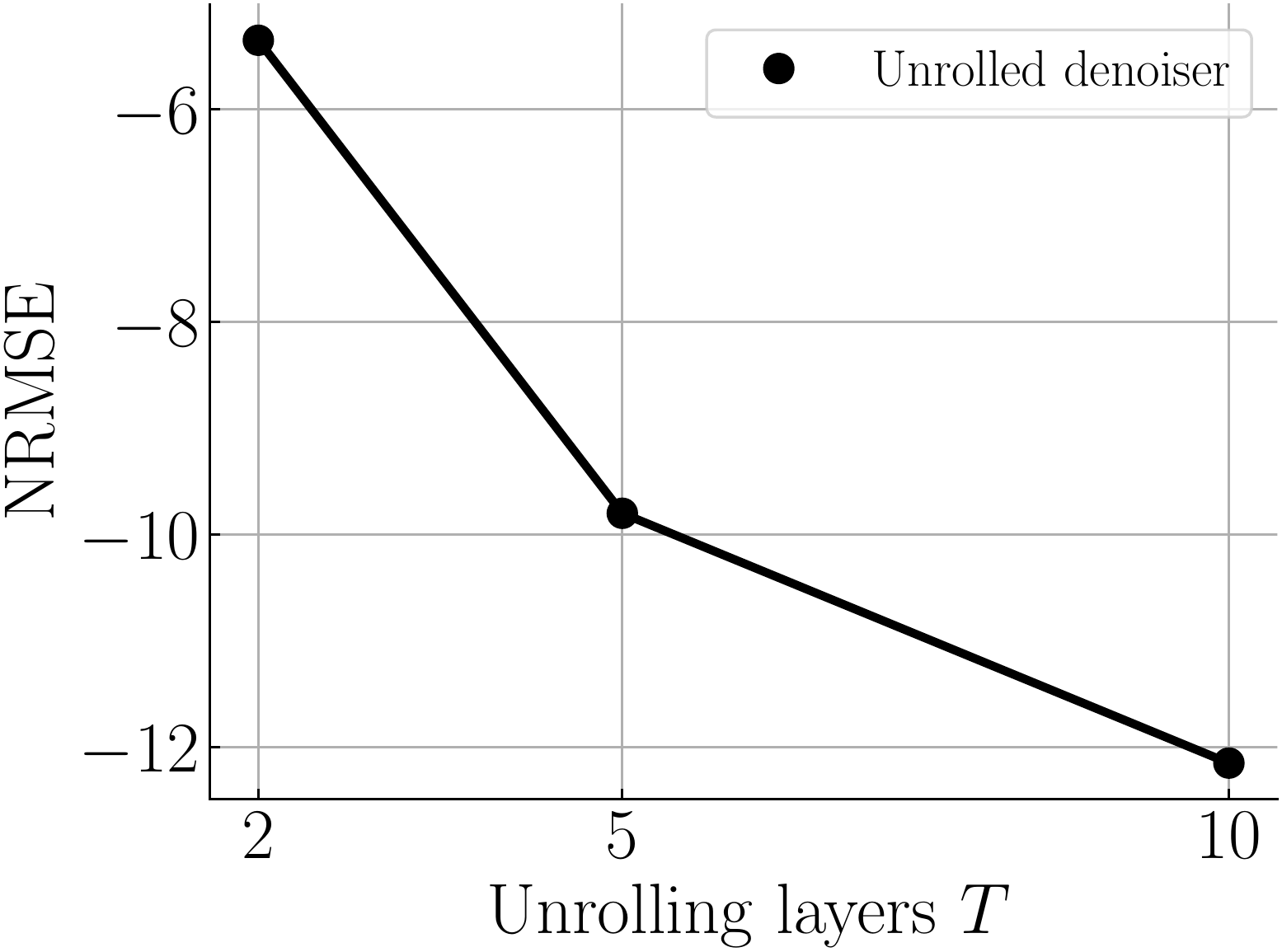}
\caption{Code NRMSE.}
\label{fig:noisy_rmse_15_effect_unrolling}
\end{subfigure}
\begin{subfigure}[t]{0.49\linewidth}
\centering
\includegraphics[width=0.999\linewidth]{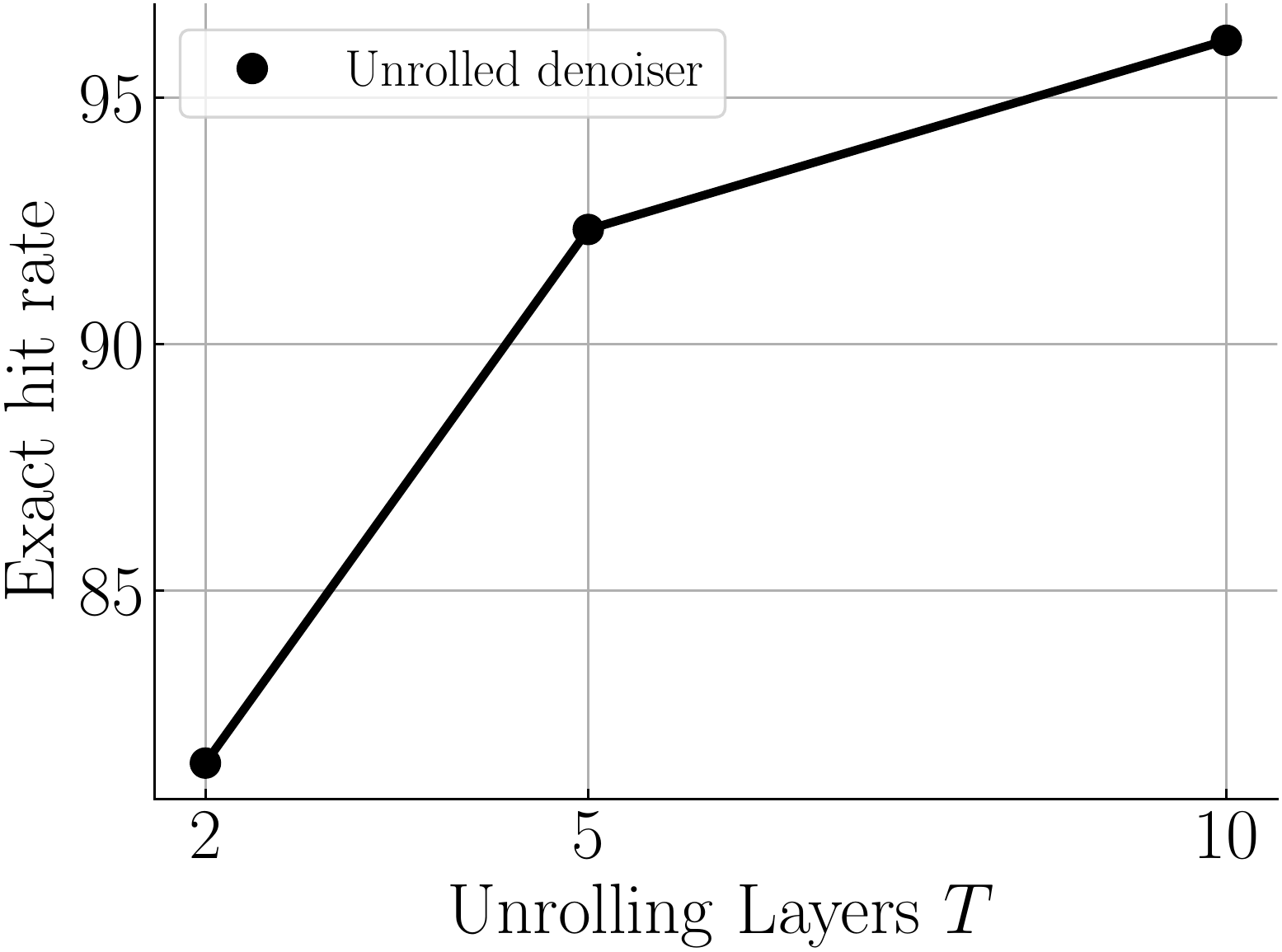}
\caption{Exact code hit rate.}
\label{fig:noisy_exact_hitrate_15_effect_unrolling}
\end{subfigure}
\caption{Effect of unrolling layers on the performance of LS-MBD denoiser in the presence of additive Gaussian noise. For SNR of $15$, with $\sigma = 0.161$.}
\label{fig:noisy_effect_of_unrolling}
\end{figure}

\subsection{Fewshot learning}\label{sec:fewshot}


In this section, we show the advantage of our unrolled learning (\cref{fig:infmap_unroll}) approach to conventional deep neural networks (\cref{fig:infmap_nn}) such as DnCNN~\cite{zhang2017beyond} in fewshot learning. We let $\text{CR} = 55.5$ and choose the source shape as a symmetric wavelet, which we assume is known. We utilized the filter banks compression (i.e., using $5$ compression filters with a kernel size of $9 \times 9$ and strides of $9$). For compressed DnCNN, after $\proj^{\text{T}}$, there is an additional fully-connected layer to go from $M_y$ to $M_x$ before the DnCNN network which its input and output have the same dimension. DnCNN uses kernel sizes of $3$, and $64$ feature maps. We set the depth of DnCNN to only $3$ layers lead to $32{,}645$ trainable parameters compared to $442$ of unrolling method; this is to alleviate the overfitting of DnCNN in low data regime.

As the amount of training data decreases from $10{,}000$ to $100$, the performance of compressed DnCNN significantly degrades, as opposed to our unrolling approach of LS-MBD-LCISTA (\cref{fig:fewshot}). We observed that unrolling approach outperforms DnCNN for all ranges of data. Most importantly, in a low data regime, DnCNN breaks down and reaches $0$ dB NRMSE and an exact hit rate below $\%40$ when there are only $100$ training examples.

\begin{figure}[!t]
\centering
\begin{subfigure}[t]{0.49\linewidth}
\centering
\includegraphics[width=0.999\linewidth]{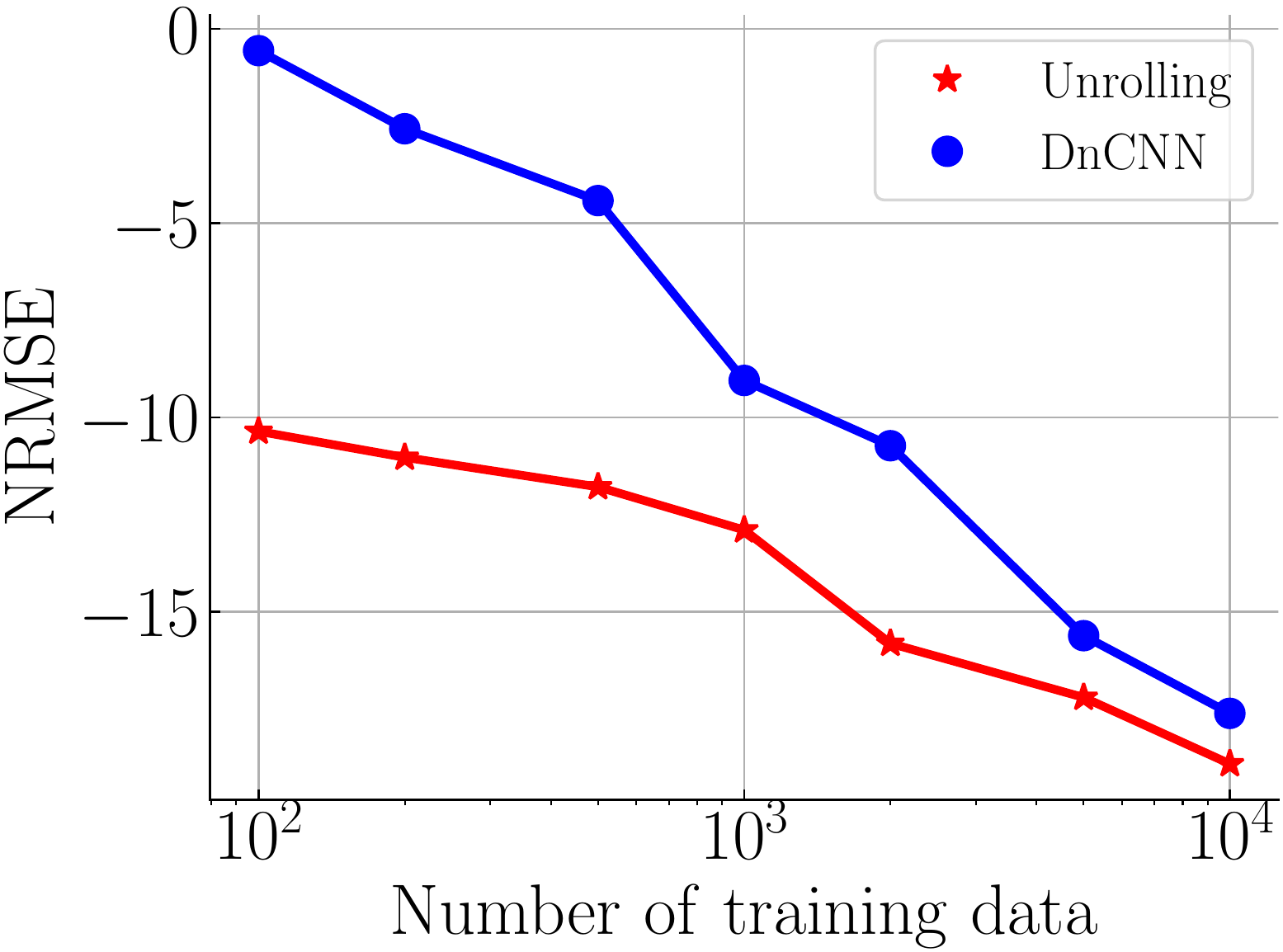}
\caption{Code NRMSE}
\label{fig:fewshot_rmse}
\end{subfigure}
\begin{subfigure}[t]{0.49\linewidth}
\centering
\includegraphics[width=0.999\linewidth]{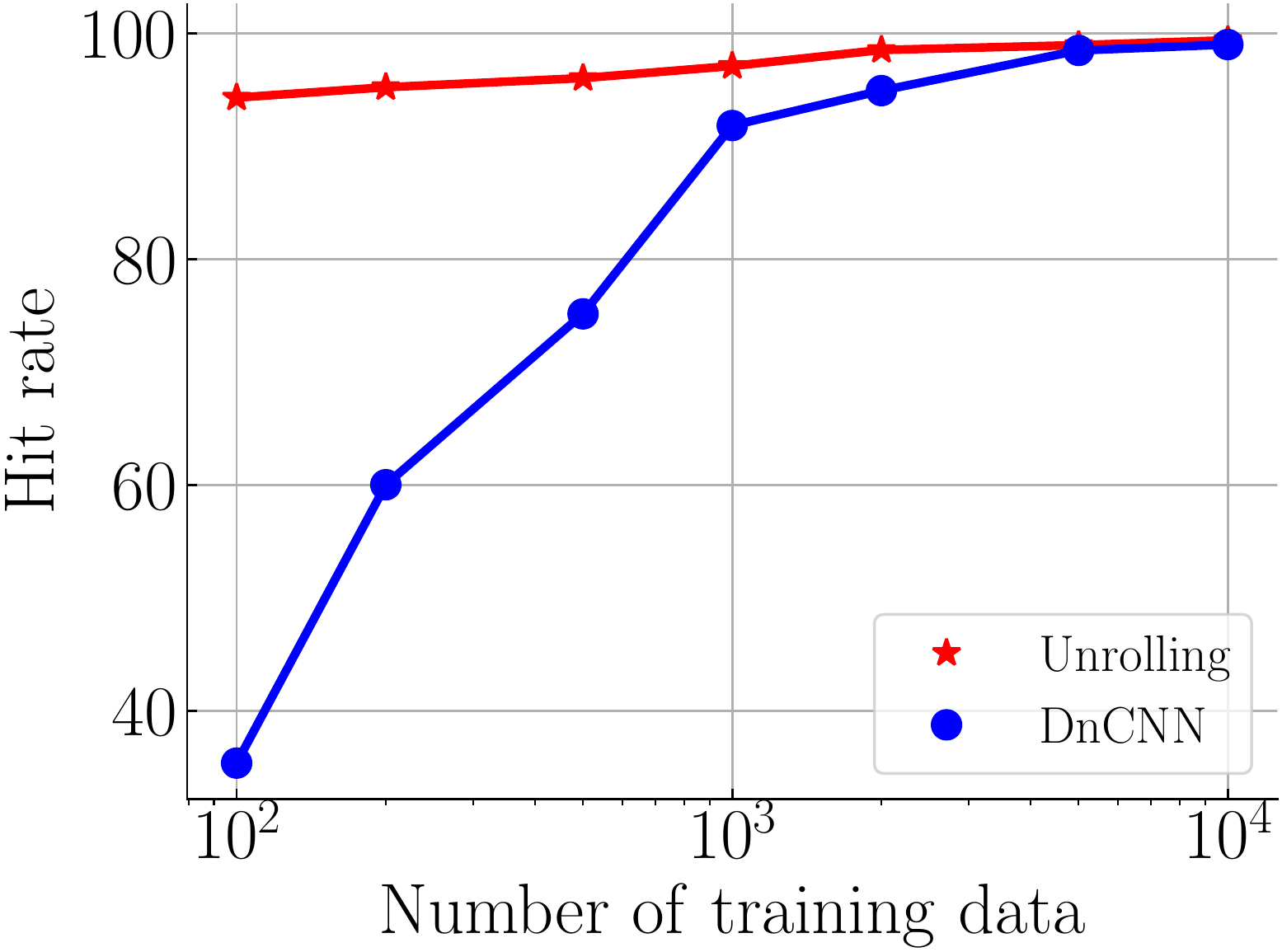}
\caption{Code exact hit rate}
\label{fig:fewshot_hitrate}
\end{subfigure}
\caption{Fewshot learning. Unrolling significantly outperforms conventional neural networks DnCNN in a data-limited regime.}
\label{fig:fewshot}
\end{figure}

\section{Conclusion}\label{sec:conclusion}

We studied the problem of sparse multichannel blind deconvolution (S-MBD). We proposed a filter-based compression to enable recovery from a small number of measurements. We theoretically studied the identifiability problem and provide conditions under which recovery is possible. We derived the necessary and sufficient conditions on the compression filter and the source for unique recovery. Through an unrolled learning framework, we proposed to learn the compression filter, resulting in better recovery from higher compression rates. Specifically, we designed a structured inference network based on an $\ell_1$-based iterative sparse coding algorithm to map compressed measurements to the sparse filters of interest. We demonstrated the robustness and superior performance of our proposed framework compared to traditional optimization-based methods. Moreover, we demonstrated the generalization capability of unrolled deep learning compared to generic deep learning in applications with data-limited regimes.

\appendix
In this section, we show that the D-SoS filter together with a source satisfying non-vanishing spectral properties \cite{mulleti_mbd} will follow the spark properties of~\Cref{theorem1}. The impulse response of the D-SoS filter considered in \cite{mulleti_mbd} is given as
\begin{align}
\mathbf{h}[m] = \sum_{k\in \mathcal{K}} e^{\mathrm{j}k\omega_0 m}, \quad 0 \leq m \leq M_h-1,
\label{eq:sos}
\end{align}
where $|\mathcal{K}| = 2L^2$ for sufficiency.
With the D-SoS filter, the entries of the matrix $\mathbf{A}_{(M_y+M_x-1, 2M_x-1)}$ for any source $\mathbf{s}_0 \in \mathbb{R}^{M_{s_0}}$ with $M_{s_0} < M_h$ are given as 
\begin{align}
    \mathbf{A}[m,l] = \sum_{k\in \mathcal{K}} S_0(e^{\mathrm{j}k\omega_0})\, e^{\mathrm{j}k\omega_0 (M_y+M_x+m-l-2)}, 
    \label{eq:D-SoS}
\end{align}
where $S_0(e^{\mathrm{j}\omega})$ is the discrete-time Fourier transform of $\mathbf{s}_0$ and $0 \leq m \leq M_h - M_y - M_x +1,\,\, 0\leq l \leq 2M_x-1$. The non-vanishing spectral condition states that $S_0(e^{\mathrm{j}\omega}) \neq 0$ for $k \in \mathcal{K}$. Then by assuming that $M_h \geq M_y+M_x+|\mathcal{K}|$, the matrix $\mathbf{A}(M_y+M_x-1, 2M_x-1)$ can be decomposed as
\begin{align}
    \mathbf{A}(M_y+M_x-1, 2M_x-1) = \mathbf{V}_1 \text{diag}(\mathbf{d}\{S_0(k\omega_0)\}_{k \in \mathcal{K}}) \mathbf{V}_2^*,
\end{align}
where entries of $|\mathcal{K}|$ length vector are consists of elements in the set $\{S_0(k\omega_0)e^{\mathrm{j}k\omega_0 (M_y+M_x-2)} \}_{k \in \mathcal{K}}$. The matrices $\mathbf{V}_1 \in \mathbb{C}^{M_y-M_x-1 \times |\mathcal{K}|}$ and $\mathbf{V}_2 \in \mathbb{C}^{2M_x-1 \times |\mathcal{K}|}$ are Vandermonde and with their $(m,k)$-th element is $e^{\mathrm{j}km\omega_0}$. As we discussed in the proof of~\Cref{theorem1}, the number of rows of $\mathbf{A}(M_y+M_x-1, 2M_x-1)$ should be greater than or equal to $2L^2$ for the spark condition to hold in the sufficient part. Hence, the matrix $\mathbf{V}_1$ is full column rank. In \cite{mulleti_mbd}, it was assumed that $S_0(k\omega_0) \neq 0, k \in \mathcal{K}$ which ensures that the matrix $\mathbf{V}_1\text{diag}(\mathbf{d}\{S_0(k\omega_0)\}_{k \in \mathcal{K}})$ is full rank. Hence, spark of matrix $\mathbf{A}(M_y+M_x-1, 2M_x-1)$ should be equal to that of matrix $\mathbf{V}_2^*$. Given that $\omega_0$ and $\mathcal{K}$ form universal sampling pattern \cite{mulleti_mbd} an $|\mathcal{K}|$ columns of matrix $\mathbf{V}_2^*$ are linearly independent and hence the matrix has full spark \cite{mulleti_mbd}. Hence, $\text{Spark}(\mathbf{A}(M_y+M_x-1, 2M_x-1)) >2L^2$.

\bibliographystyle{IEEEtran}
\bibliography{IEEETSP_v7}
	
\end{document}